\setlist{
  align=left,
  labelsep=*,
  leftmargin=*,
  topsep=1mm,
  itemsep=0mm
}
\newcommand*{\mysquare}{\rule[0.18em]{0.36em}{0.36em}}
\newcommand*{\mytriangle}{\raisebox{0.12em}{\resizebox{0.48em}{0.48em}{$\blacktriangleright$}}}
\newcommand*{\mybar}{\rule[0.32em]{0.62em}{0.08em}}
\setlist[itemize,1]{label={\mysquare\ }}
\setlist[itemize,2]{label={\mytriangle\ }}
\setlist[itemize,3]{label={\mybar\ }}
\setlist[enumerate,1]{label=\arabic*)}
\setlist[enumerate,2]{label=\arabic{enumi}.\arabic*)}
\setlist[enumerate,3]{label=\arabic{enumi}.\arabic{enumii}.\arabic*)}
\newcommand\myisodate{\number\year-\ifcase\month\or 01\or 02\or 03\or 04\or 05\or 06\or 07\or 08\or 09\or 10\or 11\or 12\fi-\ifcase\day\or 01\or 02\or 03\or 04\or 05\or 06\or 07\or 08\or 09\or 10\or 11\or 12\or 13\or 14\or 15\or 16\or 17\or 18\or 19\or 20\or 21\or 22\or 23\or 24\or 25\or 26\or 27\or 28\or 29\or 30\or 31\fi}
\newcolumntype{d}[2]{D{.}{.}{#1.#2}}
\newcommand*{\abstractnoindent}{}
\let\abstractnoindent\abstract
\renewcommand*{\abstract}{\let\quotation\quote\let\endquotation\endquote
  \abstractnoindent}
\lstdefinestyle{input}{
  backgroundcolor=\color{semilightgray},
  commentstyle=\itshape\color{chocolate},
  keywordstyle=\color{blue},
  stringstyle=\color{deepskyblue},
  numbers=left,
  numberstyle=\color{middlegray}\tiny
}
\lstdefinestyle{output}{
  backgroundcolor=\color{lightgray}
}
\lstdefinestyle{Lstyle}{
  language=[LaTeX]TeX,
  texcs={},
  otherkeywords={}
}
\lstdefinestyle{Rstyle}{
  language=R,
  literate={<-}{{$\bm\leftarrow$}}2{<<-}{{$\bm{\mathrel{\bm\leftarrow\mkern-14mu\leftarrow}$}}}2{<=}{{$\bm\le$}}2{>=}{{$\bm\ge$}}2{!=}{{$\bm\neq$}}2,
  keywords={if, else, repeat, while, function, for, in, next, break},
  otherkeywords={}
}
\renewcommand*{\cite}[2][]{\textcite[#1]{#2}}
\newif\ifstarttheorem
\newtheoremstyle{mythmstyle}%
{0.5em}
{0.5em}
{}
{}
{\sffamily\bfseries\global\starttheoremtrue}
{}
{\newline}
{\thmname{#1}\ \thmnumber{#2}\ \thmnote{(#3)}}
\theoremstyle{mythmstyle}
\newtheorem{definition}{Definition}[section]
\newtheorem{proposition}[definition]{Proposition}
\newtheorem{lemma}[definition]{Lemma}
\newtheorem{example}[definition]{Example}
\newtheorem{algorithm}[definition]{Algorithm}
\preto\itemize{%
  \if@inlabel
  \ifstarttheorem
  \mbox{}\par\nobreak\vskip\glueexpr-\parskip-\baselineskip+0.25em\relax\hrule\@height\z@
  \fi%
  \fi%
  \global\starttheoremfalse%
  \def\tempa{proof}%
  \ifx\tempa\mycurrenvir
  \ifstarttheorem
  \mbox{}\par\nobreak\vskip\glueexpr-\parskip-\baselineskip+0.25em\relax\hrule\@height\z@
  \fi%
  \fi%
  \global\starttheoremfalse%
}
\preto\enditemize{\global\starttheoremfalse}
\preto\enumerate{%
  \if@inlabel
  \ifstarttheorem
  \mbox{}\par\nobreak\vskip\glueexpr-\parskip-\baselineskip+0.25em\relax\hrule\@height\z@
  \fi%
  \fi%
  \global\starttheoremfalse%
  \def\tempa{proof}%
  \ifx\tempa\mycurrenvir
  \ifstarttheorem
  \mbox{}\par\nobreak\vskip\glueexpr-\parskip-\baselineskip+0.25em\relax\hrule\@height\z@
  \fi%
  \fi%
  \global\starttheoremfalse%
}
\preto\endenumerate{\global\starttheoremfalse}
\newcommand*{\eps}{\varepsilon}
\newcommand*{\IN}{\mathbbm{N}}
\newcommand*{\IR}{\mathbbm{R}}
\newcommand*{\IP}{\mathbbm{P}}
\newcommand*{\IE}{\mathbbm{E}}
\newcommand*{\Par}{\operatorname{Par}}
\newcommand*{\VaR}{\operatorname{VaR}}
\newcommand*{\bVaR}{\operatorname{\overline{VaR}}}
\newcommand*{\ES}{\operatorname{ES}}
\newcommand*{\R}{\textsf{R}}
\begin{document}
\thispagestyle{plain}
\begin{center}
  \sffamily {\bfseries\LARGE Improved Algorithms for Computing Worst
    Value-at-Risk:\\ Numerical Challenges and the\\[2mm] Adaptive Rearrangement
    Algorithm}
  \par\bigskip
  {\Large Marius Hofert, Amir Memartoluie, David Saunders, Tony Wirjanto\par
  \bigskip\myisodate\par}
\end{center}
\begin{abstract}
  Numerical challenges inherent in algorithms for computing worst Value-at-Risk in
  homogeneous portfolios are identified and solutions as well as words of warning concerning their
  implementation are provided. Furthermore, both conceptual and computational
  improvements to the Rearrangement Algorithm for approximating worst
  Value-at-Risk for portfolios with arbitrary marginal loss distributions are
  given. In particular, a novel Adaptive Rearrangement Algorithm is introduced and
  investigated. These algorithms are implemented using the \R\ package \texttt{qrmtools}.
\end{abstract}
\minisec{Keywords}
Risk aggregation, model uncertainty, Value-at-Risk, Rearrangement Algorithm.
\minisec{MSC2010}
65C60, 62P05

\section{Introduction}
An integral part of Quantitative Risk Management is to analyze the
one-period ahead vector of losses $\bm{L}=(L_1,\dots,L_d)$, where $L_j$
represents the loss (a random variable) associated with a given business line or
risk type with counterparty $j$, $j\in\{1,\dots,d\}$, over a fixed time horizon.
For financial institutions, the \emph{aggregated loss}
\begin{align*}
  L^+=\sum_{j=1}^d L_j
\end{align*}
is of particular interest. Under Pillar I of the Basel Accords, financial
institutions are required to set aside capital to manage market, credit and
operational risk. To this end a risk measure $\rho(\cdot)$ is used to map the
aggregate position $L^+$ to $\rho(L^+) \in \IR$ for obtaining the amount of
capital required to account for future losses over a predetermined time period. As
a risk measure, Value-at-Risk ($\VaR_\alpha$) has been widely adopted by the
financial industry since the mid nineties. It is defined as the
$\alpha$-quantile of the distribution function $F_{L^+}$ of $L^+$, i.e.,
\begin{align*}
  \VaR_{\alpha}(L^+)=F^-_{L^+}(\alpha)=\inf\{x\in\IR: F_{L^+}(x)\ge \alpha\},
\end{align*}
where $F_{L^+}^-$ denotes the quantile function of $F_{L^+}$; see
\cite{embrechtshofert2013c} for more details. A well known drawback of
$\VaR_{\alpha}(L^+)$ as a risk measure is that
$\VaR_{\alpha}(L^+)$ is not necessarily subadditive unless $\bm{L}$ follows an
elliptical distribution; see, e.g., \cite{embrechtsfurrerkaufmann2009},
\cite[p.~241]{mcneilfreyembrechts2005}, \cite{embrechtspuccettirueschendorf2013}
and \cite{hofertmcneil2014b}.

There are various methods for estimating the marginal loss distributions
$F_1,\dots,F_d$ of $L_1,\dots,L_d$, respectively, but capturing the $d$-variate
dependence structure (i.e., the underlying copula $C$) of $\bm{L}$ is often more
difficult.  This is due to the fact that typically not much is known about $C$
and estimation often not feasible (e.g., for rare-event losses occurring in
different geographic regions). In this work we focus on the case
where $C$ is unknown; the case of partial information about $C$, is studied by
\cite{bernardrueschendorfvanduffel2013} and
\cite{bernarddenuitvanduffel2014}. In our case, one only knows that
$\VaR_\alpha(L^+)\in[\underline{\VaR}_{\alpha}(L^+),\ \bVaR_{\alpha}(L^+)]$
where $\underline{\VaR}_{\alpha}(L^+)$ and $\bVaR_{\alpha}(L^+)$ denote the best
and the worst $\VaR_\alpha(L^+)$ over all distribution functions of $\bm{L}$
with marginals $F_1,\dots,F_d$, respectively. Note that this interval can be
wide, but financial firms are interested in computing it (often in high
dimensions $d$) to determine their risk capital for $L^+$ within this range.  As
we show in this work, even for small $d$ (and other moderate parameter choices)
this can be challenging.

In particular, we investigate solutions in the homogeneous case (i.e.,
$F_1=\dots=F_d$) which are considered ``explicit''; see
\cite{embrechtspuccettirueschendorfwangbeleraj2014} (note that the
  formulas for both $\underline{\VaR}_{\alpha}(L^+)$ and $\bVaR_{\alpha}(L^+)$
  have typographical errors; we correct them below).
In the general, inhomogeneous case (i.e., not all $F_j$'s necessarily being
equal), we consider the Rearrangement Algorithm of
\cite{embrechtspuccettirueschendorf2013} for computing
$\underline{\VaR}_{\alpha}(L^+)$ and $\bVaR_{\alpha}(L^+)$. The presented
algorithms with conceptual and numerical improvements have been implemented in
the \R\ package \texttt{qrmtools}; see also the accompanying vignette
\texttt{VaR\_bounds} which provides further results, numerical investigations,
diagnostic checks and an application. All the results in this paper can be
reproduced with the package and vignette (and, obviously, other parameters can
be chosen if of interest). For a different approach for computing
$\underline{\VaR}_{\alpha}(L^+)$ and $\bVaR_{\alpha}(L^+)$ not discussed here,
see \cite{bernardmcleish2015}. In what follows, we focus on the worst
$\VaR_\alpha(L^+)$, i.e., $\bVaR_{\alpha}(L^+)$.

This paper is organized as follows. In Section~\ref{sec:known:sol} we highlight
and solve numerical challenges that practitioners may face when implementing
theoretical solutions for $\bVaR_{\alpha}(L^+)$ in the homogeneous case
$F_1=\dots=F_d$. Section~\ref{sec:RA} presents the main concept underlying the
Rearrangement Algorithm (RA) for computing $\underline{\VaR}_{\alpha}(L^+)$ and
$\bVaR_{\alpha}(L^+)$, levies criticism on its tuning parameters and
investigates its empirical performance using various test cases.
Section~\ref{sec:ARA} then presents a conceptually and numerically improved
version of the RA, which we call the Adaptive Rearrangement Algorithm (ARA), for
calculating $\underline{\VaR}_{\alpha}(L^+)$ and
$\bVaR_{\alpha}(L^+)$. Section~\ref{sec:con} concludes. Proofs of results
stated in the main body are relegated to an appendix.

\section{Known optimal solutions in the homogeneous case and their tractability}\label{sec:known:sol}
In order to assess the quality of general algorithms such as the RA, we need to
know (at least some) optimal solutions with which we can compare such
algorithms. \cite[Proposition~4]{embrechtspuccettirueschendorf2013} and
\cite[Proposition~1]{embrechtspuccettirueschendorfwangbeleraj2014} present
formulas for obtaining $\bVaR_\alpha(L^+)$ in
the homogeneous case. In this section, we address the corresponding numerical aspects and
algorithmic improvements. We assume $d\ge3$ throughout;
for $d=2$, \cite[Proposition~2]{embrechtspuccettirueschendorf2013} provide an
explicit solution for computing $\bVaR_\alpha(L^+)$ (under weak assumptions).

\subsection{Crude bounds for any $\VaR_\alpha(L^+)$}
The following lemma provides (crude) bounds for $\VaR_\alpha(L^+)$ which
are useful for computing initial intervals (see Section~\ref{sec:dual:bd}) and
conducting sanity checks. Note that we do not make any (moment or other)
assumptions on the involved marginal loss distributions; in fact, they do not
even have to be equal. Furthermore, the bounds do not depend on the underlying
unknown copula.
\begin{lemma}[Crude bounds for $\VaR_\alpha(L^+)$]\label{lem:crude}
  Let $L_j\sim F_j$, $j\in\{1,\dots,d\}$. For any $\alpha\in(0,1)$,
  \begin{align}
    d\min_j F_j^-(\alpha/d)\le\VaR_\alpha(L^+)\le d\max_j
    F_j^-\Bigl(1-\frac{1-\alpha}{d}\Bigr),\label{eq:crude:VaR:bounds}
  \end{align}
  where $F_j^-$ denotes the quantile function of $F_j$.
\end{lemma}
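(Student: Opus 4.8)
The plan is to prove the two inequalities separately, in each case by bounding the distribution function $F_{L^+}$ of $L^+$ pointwise and then converting this into a bound on $\VaR_\alpha(L^+)=F_{L^+}^-(\alpha)$ through the elementary equivalence $F_{L^+}^-(\alpha)\le x\iff F_{L^+}(x)\ge\alpha$ (valid because $F_{L^+}$ is right-continuous). The whole argument rests on two obvious event inclusions together with the union bound (Bonferroni's inequality), and uses no information about the copula of $\bm L$, which is exactly why possibly unequal marginals cause no difficulty.

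For the upper bound, write $q_j:=F_j^-\bigl(1-(1-\alpha)/d\bigr)$ and $m:=\max_j q_j$ (nothing to prove if $m=\infty$). The observation I would use is that $L_j\le m$ for every $j$ forces $L^+\le dm$, hence $\{L^+>dm\}\subseteq\bigcup_{j=1}^d\{L_j>m\}$. Applying the union bound, monotonicity ($m\ge q_j$), and the standard fact $F_j\bigl(F_j^-(u)\bigr)\ge u$ with $u=1-(1-\alpha)/d$, one gets
\[
\IP(L^+>dm)\le\sum_{j=1}^d\IP(L_j>q_j)\le\sum_{j=1}^d\frac{1-\alpha}{d}=1-\alpha,
\]
so $F_{L^+}(dm)\ge\alpha$ and therefore $\VaR_\alpha(L^+)\le dm$, the right-hand inequality in \eqref{eq:crude:VaR:bounds}.

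For the lower bound, write $p_j:=F_j^-(\alpha/d)$ and $\ell:=\min_j p_j$ (nothing to prove if $\ell=-\infty$). Here I would fix an arbitrary $x<d\ell$, pick $\delta>0$ small enough that $d(\ell-\delta)>x$, and note that $L_j\ge\ell-\delta$ for all $j$ would give $L^+\ge d(\ell-\delta)>x$; hence $\{L^+\le x\}\subseteq\bigcup_{j=1}^d\{L_j<\ell-\delta\}$. Since $\ell-\delta<\ell\le p_j=F_j^-(\alpha/d)$, the definition of the generalized inverse gives $F_j(\ell-\delta)<\alpha/d$, so the union bound yields
\[
F_{L^+}(x)\le\sum_{j=1}^d\IP(L_j<\ell-\delta)\le\sum_{j=1}^d F_j(\ell-\delta)<\sum_{j=1}^d\frac{\alpha}{d}=\alpha.
\]
As this holds for every $x<d\ell$, we obtain $\inf\{x:F_{L^+}(x)\ge\alpha\}\ge d\ell$, i.e.\ $\VaR_\alpha(L^+)\ge d\ell$, the left-hand inequality in \eqref{eq:crude:VaR:bounds}.

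The main (and essentially only) obstacle is the careful handling of the generalized inverses: in the lower-bound step it is crucial to keep the inequality $x<d\ell$ strict and to insert the slack $\delta$, so that each component is pushed \emph{strictly} below $p_j$ and one may invoke the strict inequality $F_j(t)<\alpha/d$ for $t<F_j^-(\alpha/d)$, rather than only the weak inequality $F_j(p_j)\ge\alpha/d$ (which would merely give $F_{L^+}(x)\le\alpha$ and hence not quite the claimed bound). Everything else reduces to the two inclusions above and Bonferroni's inequality.
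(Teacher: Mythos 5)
Your proof is correct and follows essentially the same route as the paper's: bound the distribution function of $L^+$ via the inclusions $\{\sum_j L_j\le x\}\subseteq\bigcup_j\{L_j\le x/d\}$ (resp.\ $\{\max_j L_j\le x/d\}\subseteq\{L^+\le x\}$) together with Boole's/Bonferroni's inequality, then translate into quantile bounds. Your handling of the generalized inverses (the strict inequality with the $\delta$-slack) is in fact slightly more careful than the paper's ``if and only if'' shortcut, but the underlying argument is the same.
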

The bounds \eqref{eq:crude:VaR:bounds} can be computed with the function
\texttt{crude\_VaR\_bounds()} in the \R\ package \texttt{qrmtools}.

\subsection{The dual bound approach for computing $\bVaR_{\alpha}(L^+)$}\label{sec:dual:bd}
This approach for computing $\bVaR_{\alpha}(L^+)$ in the homogeneous case with
margin(s) $F$ is presented in
\cite[Proposition~4]{embrechtspuccettirueschendorf2013} and termed the
\emph{dual bound approach} in what follows; note that there is no corresponding
algorithm for computing $\underline{\VaR}_{\alpha}(L^+)$ with this approach. In
the remaining part of this subsection we assume that $F(0)=0$, $F(x)<1$ for all
$x\in[0,\infty)$ and that $F$ is absolutely continuous with ultimately
decreasing density. Let
\begin{align}
  D(s,t)=\frac{d}{s-dt}\int_t^{s-(d-1)t}\bar{F}(x)\,dx\quad\text{and}\quad D(s)=\min_{t\in[0,s/d]}D(s,t),\label{eq:D}
\end{align}
where $\bar{F}$ denotes the \emph{survival function of $F$}, i.e., $\bar{F}(x)=1-F(x)$.
In comparison to \cite[Proposition~4]{embrechtspuccettirueschendorf2013},
the \emph{dual bound} $D$ here uses a compact interval for $t$ (and thus
$\min\{\cdot\}$) by our requirement $F(0)=0$ and since $\lim_{t\uparrow
  s/d}D(s,t)=d\bar{F}(s/d)$ by l'Hospital's Rule. The procedure for computing
$\bVaR_{\alpha}(L^+)$ according to
\cite[Proposition~4]{embrechtspuccettirueschendorf2013} can now be given as
follows.
\begin{algorithm}[Computing $\bVaR_\alpha(L^+)$ according to the dual bound approach]\label{algo:dual}
  \begin{enumerate}
  \item Specify initial intervals $[s_l,s_u]$ and $[t_l,t_u]$.
  \item\label{algo:dual:inner} Inner root-finding in $t$: For each considered $s\in[s_l,s_u]$, compute $D(s)$ by
    iterating over $t\in[t_l,t_u]$ until a $t^*$ is found for which
    $h(s,t^*)=0$, where
    \begin{align*}
      h(s,t):=D(s,t)-(\bar{F}(t)+(d-1)\bar{F}(s-(d-1)t)).
    \end{align*}
    Then $D(s)=\bar{F}(t^*)+(d-1)\bar{F}(s-(d-1)t^*)$.
  \item Outer root-finding in $s$: Iterate Step~\ref{algo:dual:inner} over
    $s\in[s_l,s_u]$ until an $s^*$ is found for which $D(s^*)=1-\alpha$. Then
    return $s^*=\bVaR_{\alpha}(L^+)$.
  \end{enumerate}
\end{algorithm}
Algorithm~\ref{algo:dual} is implemented in the function
\texttt{worst\_VaR\_hom(..., method="dual")} in the \R\ package
\texttt{qrmtools}; the dual bound $D$ is available via
\texttt{dual\_bound()}. It requires a one-dimensional numerical
integration (unless $\bar{F}$ can be integrated
explicitly) within two nested calls of a root-finding algorithm
(\texttt{uniroot()} in \R). Note that Algorithm~\ref{algo:dual} requires specification of
the two initial intervals $[s_l,s_u]$ and $[t_l,t_u]$ and
\cite{embrechtspuccettirueschendorf2013} give no practical advice on how to
choose them.

First consider $[t_l,t_u]$. By our requirement $F(0)=0$ one can choose $t_l=0$
(or the infimum of the support of $F$). For $t_u$ one would like to choose
$s/d$; see the definition of $D(s)$ in \eqref{eq:D}. However, care has to be
taken as $h(s,s/d)=0$ for any $s$ and thus the inner root-finding procedure will
directly stop when a root is found at $t_u=s/d$. To take care of this, the inner
root-finding algorithm in \texttt{worst\_VaR\_hom(..., method="dual")} fixes
\texttt{f.upper}, i.e., $h(s,t_u)$ for the considered $s$, to $-h(s,0)$ so that
a root below $t_u=s/d$ can be detected; note that this is a (inelegant; for
lack of a better method) adjustment in the function value, and not in the
root-finding interval $[t_l,t_u]$.

Now consider $[s_l,s_u]$, in particular, $s_l$. According to
\cite[Proposition~4]{embrechtspuccettirueschendorf2013} $s_l$ has to be chosen
``sufficiently large''. If it is chosen too small, the inner root-finding
procedure in Step~\ref{algo:dual:inner} of Algorithm~\ref{algo:dual} will not be
able to locate a root; see also the left-hand side of
Figure~\ref{fig:dual}. There is currently no (good) solution known on how to
automatically determine a sufficiently large $s_l$ (note that
\texttt{worst\_VaR\_hom(..., method="dual")} currently requires $[s_l,s_u]$ to be
specified by the user). Given $s_l$, one can then choose $s_u$ as the maximum of
$s_l+1$ and the upper bound on $\VaR_\alpha(L^+)$ as given in
\eqref{eq:crude:VaR:bounds}, for example.

On the theoretical side, the following proposition implies that if $\bar{F}$ is
strictly convex, so is $D(s,\cdot)$ for fixed $s$. This shows the uniqueness of
the minimum when computing $D(s)$ as in \eqref{eq:D}. A standard result on
convexity of objective value functions then implies that $D(s)$ itself is also
convex; see \cite[Proposition~2.22]{rockafellarwets1998} and also the right-hand
side of Figure~\ref{fig:dual} below.
\begin{proposition}[Properties of $D(s,t)$ and $D(s)$]\label{prop:D}
  \begin{enumerate}
  \item $D(s)$ is decreasing.
  \item\label{prop:D:2} If $\bar{F}$ is convex, so is $D(s,t)$.
  \end{enumerate}
\end{proposition}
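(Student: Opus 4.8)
The plan is to base the analysis on a normalized form of $D(s,t)$ obtained by the substitution $x=t+u(s-dt)$, $u\in[0,1]$, which rescales the variable-length interval $[t,s-(d-1)t]$ (of length $s-dt$) to $[0,1]$ and gives
$$D(s,t)=d\int_0^1\bar F\bigl((1-ud)t+us\bigr)\,du,$$
an identity I would first check by the change of variables; it remains valid at the boundary $s=dt$, where the right-hand side equals $d\bar F(s/d)$ and matches the l'Hospital value already recorded in the text. Equivalently, $D(s,t)$ is $d$ times the average of $\bar F$ over $[t,s-(d-1)t]$, a form I will use for part~(1).

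For part~(1) the only structural input is that $\bar F=1-F$ is non-increasing on $[0,\infty)$. Fix $0<s_1\le s_2$ and any $t\in[0,s_1/d]$; then $t\le s_1/d\le s_2/d$, so $t$ is admissible for $s_2$ as well, and $[t,s_2-(d-1)t]\supseteq[t,s_1-(d-1)t]$ since $s_2-(d-1)t\ge s_1-(d-1)t$. Thus $D(s_2,t)$ is $d$ times the average of $\bar F$ over the interval obtained from the one defining $D(s_1,t)$ by extending it to the right. I would make precise that such an extension cannot raise the average: writing the new average as a positively weighted average of the old average and the average over the appended piece, and noting that the old average is $\ge\bar F$ at the common right endpoint, which in turn dominates $\bar F$ on the appended piece, gives $D(s_2,t)\le D(s_1,t)$. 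Taking the infimum over $t\in[0,s_1/d]$ then yields $D(s_2)\le D(s_2,t)\le D(s_1,t)$ for every such $t$, hence $D(s_2)\le D(s_1)$; the degenerate endpoint $t=s_1/d$ is covered by the limiting form above, and strictness holds wherever $\bar F$ is strictly decreasing.

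For part~(2) I would argue from the normalized representation: for each fixed $u\in[0,1]$ the map $(s,t)\mapsto(1-ud)t+us$ is affine, so $(s,t)\mapsto\bar F\bigl((1-ud)t+us\bigr)$ is a convex function composed with an affine map, hence convex on the convex domain $\{(s,t):0\le t\le s/d\}$. Integrating this family against the finite nonnegative measure $du$ preserves convexity (the convexity inequality passes under the integral sign because the argument is affine in $(s,t)$), so $D(s,t)$ is jointly convex; in particular $D(s,\cdot)$ is convex for each fixed $s$, which is exactly what is needed for uniqueness of the minimizer in $D(s)=\min_t D(s,t)$, while joint convexity is what feeds the inf-projection result \cite[Proposition~2.22]{rockafellarwets1998} to conclude convexity of $D$ itself.

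I expect the only non-mechanical point to be identifying the right change of variables at the outset: once $D(s,t)$ is expressed as an average of $\bar F$ along a segment whose endpoints are affine in $(s,t)$, part~(2) is immediate from composition and integration of convex functions, and part~(1) is immediate from monotonicity of $\bar F$. The remaining care is bookkeeping — checking that the comparison point $t$ stays feasible for the larger $s$, that the minimization domain $[0,s/d]$ and its degenerate endpoint are handled via the limiting form of the representation, and that the "extending the interval cannot raise the average" step is stated cleanly.
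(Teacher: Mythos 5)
Your argument is correct, and part (2) is essentially the paper's proof verbatim: the same substitution reducing $D(s,t)$ to $d\int_0^1\bar F(sz+t(1-zd))\,dz$, convexity of $\bar F$ composed with the affine map in $(s,t)$, and integration preserving convexity on the convex domain $\{0\le t\le s/d\}$, with Rockafellar--Wets feeding the convexity of $D(s)$ itself. Part (1) is where you genuinely deviate. The paper fixes a minimizer $t'$ for the smaller argument $s'$, then constructs a companion point $t=t'+(s-s')/d$ for $s$ so that both integrals run over intervals of the \emph{same} length $\kappa=s'-dt'=s-dt$, and compares them by the shift-monotonicity of $\bar F$ (with the degenerate case $\kappa=0$ handled separately). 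You instead keep the \emph{same} $t\in[0,s_1/d]$ for both arguments and compare averages over nested intervals, using the observation that appending a right piece cannot raise the average of a non-increasing function; taking the infimum over $t$ then gives $D(s_2)\le D(s_1)$ without ever invoking a minimizer. Both reductions rest solely on $\bar F$ being non-increasing; yours avoids the matched-length bookkeeping and the existence/selection of a minimizer at $s'$ (only the endpoint $t=s_1/d$ needs the limiting value $d\bar F(s_1/d)$, which you address), while the paper's shift argument keeps the two integrals directly comparable term by term. Either route is complete and elementary; your nested-interval version is arguably the cleaner of the two for part (1).
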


\begin{example}[Auxiliary functions for the dual bound approach]
  As an example, consider $d=8$ $\Par(\theta)$ risks with distribution function
  $F_j(x)=1-(1+x)^{-\theta_j}$, $x\ge0$, $\theta_j>0$. The left-hand side of
  Figure~\ref{fig:dual} illustrates $t\mapsto h(s,t)$ for $\theta=2$ and various
  $s$. Note that $h(s,s/d)$ is indeed $0$ and for (too) small $s$, $h(s,t)$ does
  not have a root for $t\in[0,s/d)$ as mentioned above. The right-hand side of
  Figure~\ref{fig:dual} shows the decreasing dual bound $D(s)$ for various
  parameters $\theta$.
  \begin{figure}[htbp]
    \centering
    \includegraphics[width=0.48\textwidth]{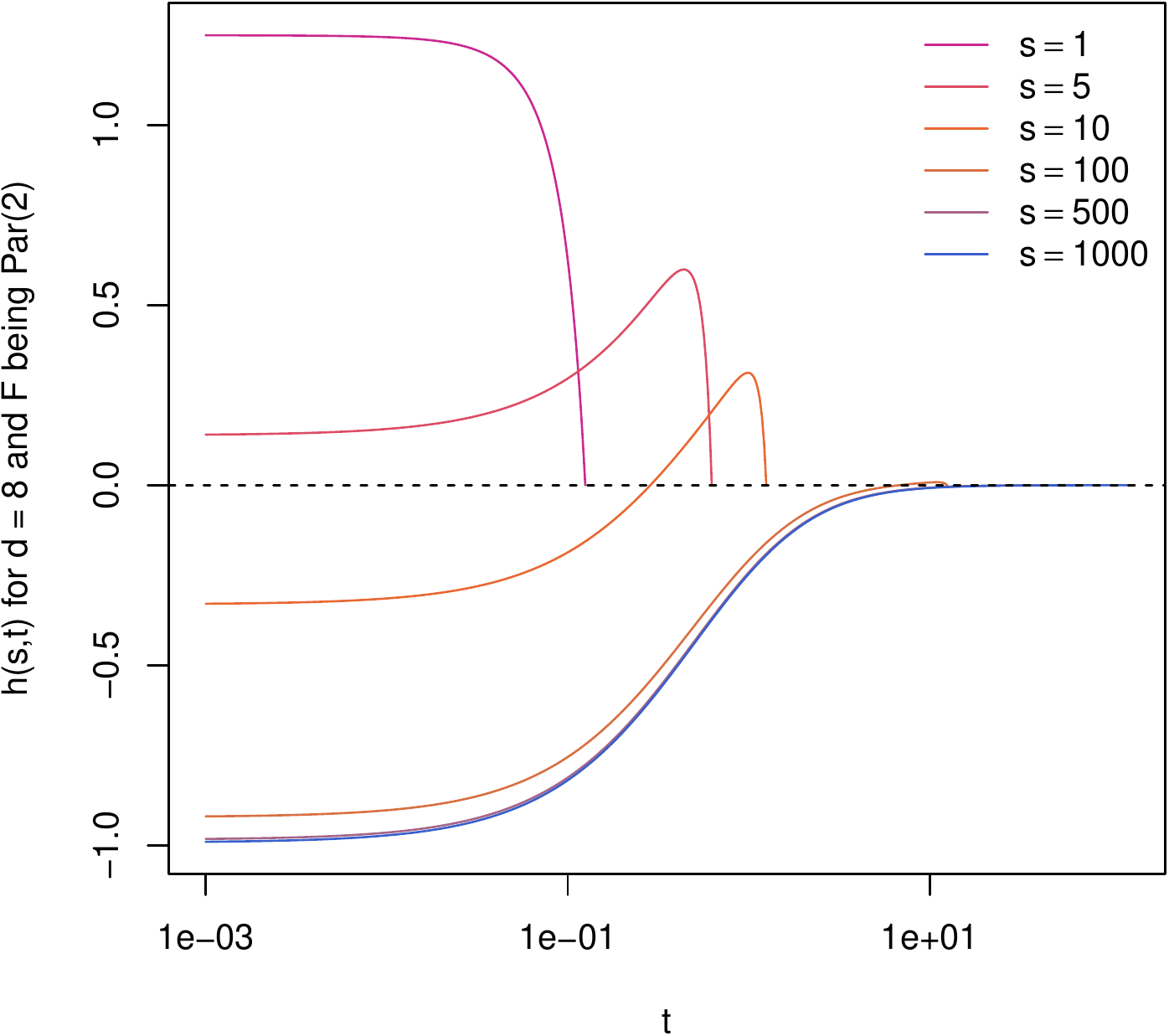}%
    \hfill
    \includegraphics[width=0.48\textwidth]{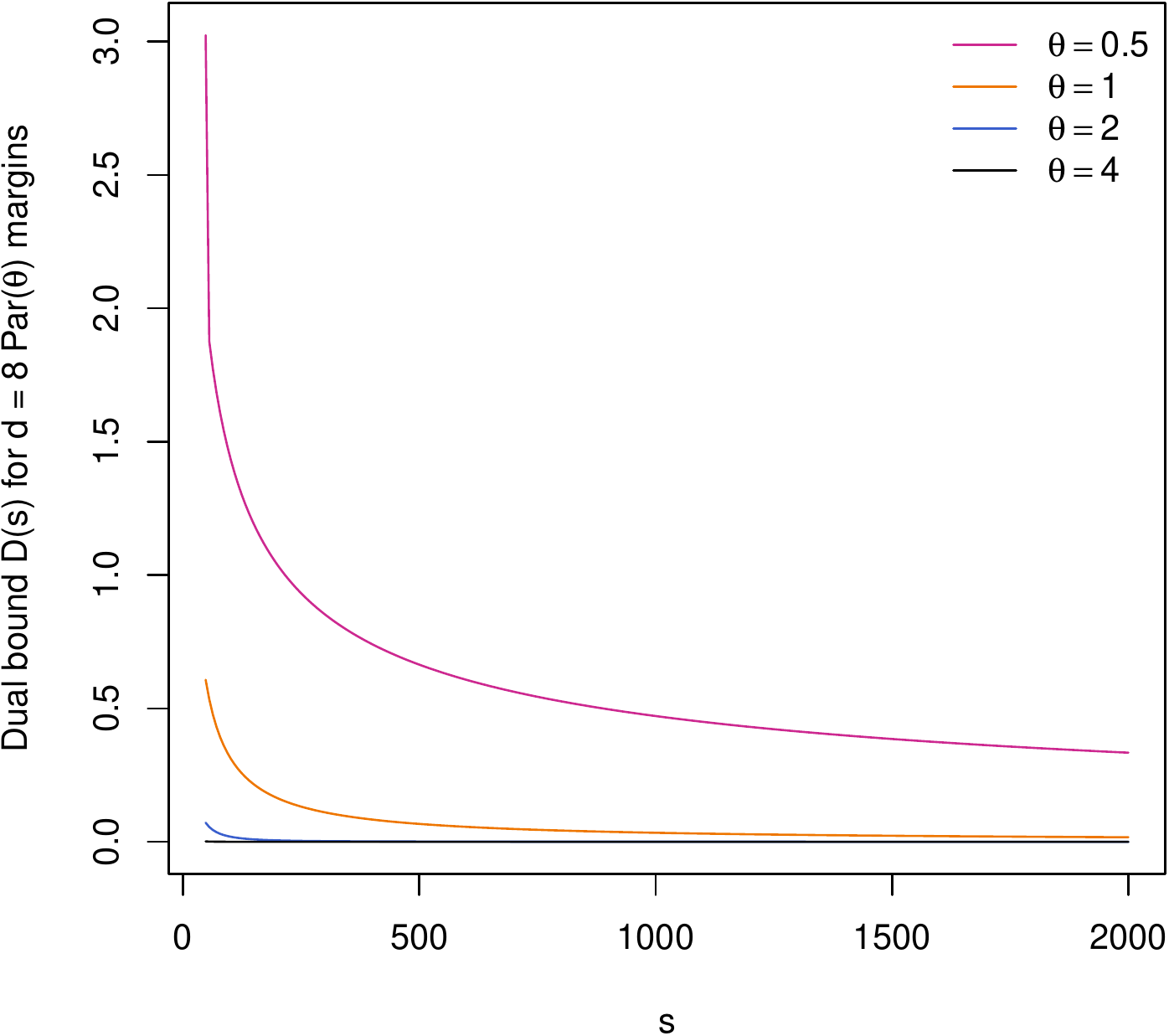}%
    \caption{$t\mapsto h(s,t)$ for various $s$, $d=8$ and $F$ being $\Par(2)$
      (left-hand side). The dual bound $D(s)$ for $d=8$ and $F$ being
      $\Par(\theta)$ for various parameters $\theta$ (right-hand side).}
    \label{fig:dual}
  \end{figure}
\end{example}

\subsection{Wang's approach for computing $\bVaR_\alpha(L^+)$}
\subsubsection*{Theory}
The approach mentioned in
\cite[Proposition~1]{embrechtspuccettirueschendorfwangbeleraj2014} is termed
\emph{Wang's approach} here. It originates from
\cite[Corollary~3.7]{wang2013bounds} and, thus, strictly speaking, precedes
the dual bound approach of
\cite[Proposition~4]{embrechtspuccettirueschendorf2013}.
It is conceptually simpler and numerically more stable than the dual bound
approach, yet still not straightforward to apply. For notational simplicity, let us introduce
\begin{align}
  a_c=\alpha+(d-1)c,\quad b_c=1-c,\label{eq:def:ab}
\end{align}
for $c\in[0, (1-\alpha)/d]$ (so that $a_c\in[\alpha,1-(1-\alpha)/d]$ and
$b_c\in[1-(1-\alpha)/d,1]$) and
\begin{align*}
  \bar{I}(c):=\frac{1}{b_c-a_c}\int_{a_c}^{b_c} F^-(y)\,dy,\quad c\in(0,
  (1-\alpha)/d], 
\end{align*}
with the assumption that $F$ admits a density which is positive and decreasing
on $[\beta,\infty)$ for some $\beta\le
F^-(\alpha)$. 
Then, for $L\sim F$,
\begin{align}
  \bVaR_\alpha(L^+)=d\,\IE[L\,|\,L\in[F^-(a_c),F^-(b_c)]],\quad\alpha\in[F(\beta),1),\label{eq:worst:VaR:Wang:cond:exp}
\end{align}
where $c$ (typically depending on $d$, $\alpha$) is the smallest number in
$(0,(1-\alpha)/d]$ such that
\begin{align}
  \bar{I}(c)\ge \frac{d-1}{d}F^-(a_c)+\frac{1}{d}F^-(b_c).\label{eq:worst:VaR:Wang:c}
\end{align}
In contrast to what is given in
\cite{embrechtspuccettirueschendorfwangbeleraj2014}, note that
$(0,(1-\alpha)/d]$ has to exclude 0 since otherwise, for $\Par(\theta)$ margins
with $\theta\in(0,1]$, $c$ equals 0 and thus, erroneously,
$\bVaR_\alpha(L^+)=\infty$. 
If $F$ is the distribution function of the $\Par(\theta)$ distribution,
$\bar{I}$ is given by
\begin{align}
  \bar{I}(c)=\begin{cases}
    \frac{1}{b_c-a_c}\frac{\theta}{1-\theta}((1-b_c)^{1-1/\theta}-(1-a_c)^{1-1/\theta})-1,&\text{if}\,\ \theta\neq1,\\
    \frac{1}{b_c-a_c}\log\bigl(\frac{1-a_c}{1-b_c}\bigr)-1,&\text{if}\,\ \theta=1.
  \end{cases}\label{eq:Ibar:Par}
\end{align}

The conditional distribution function $F_{L|L\in[F^-(a_c),F^-(b_c)]}$ of
$L\,|\,L\in[F^-(a_c),F^-(b_c)]$ is given by
$F_{L|L\in[F^-(a_c),F^-(b_c)]}(x)=\frac{F(x)-a_c}{b_c-a_c}$,
$x\in[F^-(a_c),F^-(b_c)]$.  Using this fact and a substitution, we obtain that,
for $\alpha\in[F(\beta),1)$, \eqref{eq:worst:VaR:Wang:cond:exp} becomes
\begin{align}
  \bVaR_\alpha(L^+)=d\int_{F^-(a_c)}^{F^-(b_c)}x\,dF_{L|L\in[F^-(a_c),F^-(b_c)]}(x)=d\,\frac{\int_{F^-(a_c)}^{F^-(b_c)}x\,dF(x)}{b_c-a_c}=d\bar{I}(c).\label{eq:worst:VaR:Wang}
\end{align}
Equation~\eqref{eq:worst:VaR:Wang} has the advantage of having the integration in
$\bar{I}(c)$ over a (at least theoretically) compact interval. Furthermore, finding the
smallest $c$ such that \eqref{eq:worst:VaR:Wang:c} holds also involves
$\bar{I}(c)$. A procedure thus only needs to know the quantile function $F^-$
to compute $\bVaR_\alpha(L^+)$. This leads to the following algorithm.
\begin{algorithm}[Computing $\bVaR_\alpha(L^+)$ according to Wang's approach]\label{alg:worst:VaR:Wang}
  \begin{enumerate}
  \item\label{algo:wang:1} Specify an initial interval $[c_l,c_u]$ with $0\le c_l<c_u<(1-\alpha)/d$.
  \item\label{algo:wang:2} Root-finding in $c$: Iterate over $c\in[c_l,c_u]$
    until a $c^*$ is found for which $h(c^*)=0$, where
    \begin{align}
      h(c):=\bar{I}(c)-\Bigl(\frac{d-1}{d}F^-(a_c)+\frac{1}{d}F^-(b_c)\Bigr),\quad
      c\in(0,(1-\alpha)/d].\label{eq:wang:h}
    \end{align}
    Then return $(d-1)F^-(a_{c^*})+F^-(b_{c^*})$.
  \end{enumerate}
\end{algorithm}
This procedure is implemented in the function \texttt{worst\_VaR\_hom(...,
  method="Wang")} in the \R\ package \texttt{qrmtools} with numerical
integration via \R's \texttt{integrate()} for computing $\bar{I}(c)$; the
function \texttt{worst\_VaR\_hom(...,
  method="Wang.Par")} makes use of \eqref{eq:Ibar:Par}.

The following proposition shows that the root-finding problem in
Step~\ref{algo:wang:2} of Algorithm~\ref{alg:worst:VaR:Wang} is well-defined in
the case of Pareto margins for all $\theta>0$ (including the infinite-mean
case); for other distributions under more restrictive assumptions, see \cite{bernardjiangwang2014}.
\begin{proposition}\label{prop:unique:root:Par}
  Let $F(x)=1-(1+x)^{-\theta}$, $\theta>0$, be the distribution function of the
  $\Par(\theta)$ distribution. Then $h$ in Step~\ref{algo:wang:2} of
  Algorithm~\ref{alg:worst:VaR:Wang} has a unique root on $(0, (1-\alpha)/d)$,
  for all $\alpha\in(0,1)$ and $d>2$.
\end{proposition}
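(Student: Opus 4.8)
The plan is to reduce the root-finding problem for $h$ to the analysis of a single smooth function of one variable, and then to show that function is strictly unimodal. Write $F^-(y)=(1-y)^{-1/\theta}-1$, put $p:=1/\theta$, and pass to the variable
\begin{align*}
  r \;=\; \frac{1-a_c}{1-b_c} \;=\; \frac{1-\alpha-(d-1)c}{c}.
\end{align*}
Since $1-a_c$ and $1-b_c=c$ are positive and satisfy $(1-a_c)+(d-1)(1-b_c)=1-\alpha$, the map $c\mapsto r$ is a strictly decreasing bijection from $(0,(1-\alpha)/d)$ onto $(1,\infty)$. Substituting into \eqref{eq:wang:h}, using \eqref{eq:Ibar:Par} and the identity $\frac1{r-1}\int_1^r t^{-p}\,dt=\frac{r^{1-p}-1}{(1-p)(r-1)}$ (with $\int_1^r t^{-1}\,dt=\log r$ when $\theta=1$), a direct computation gives $c^{1/\theta}h(c)=\Theta(r)/\bigl(d(r-1)\bigr)$, where
\begin{align*}
  \Theta(r) \;=\; d\int_1^r t^{-p}\,dt \;-\;(d-1)(r-1)r^{-p}\;-\;(r-1).
\end{align*}
Since $c^{1/\theta}>0$ and $r>1$, roots of $h$ in $(0,(1-\alpha)/d)$ correspond bijectively to roots of $\Theta$ in $(1,\infty)$, so it suffices to show $\Theta$ has exactly one root in $(1,\infty)$.

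For this I would work directly with $\Theta$. One checks $\Theta(1)=0$ and $\Theta(r)\to-\infty$ as $r\to\infty$ (the $-(r-1)$ term dominates, since $\int_1^r t^{-p}\,dt=o(r)$). Differentiating,
\begin{align*}
  \Theta'(r) \;=\; r^{-p}\bigl(1+(d-1)p(1-1/r)\bigr)-1,
\end{align*}
so $\Theta'(r)>0$ exactly when $\psi(r):=r^{p}-1-(d-1)p(1-1/r)<0$. This $\psi$ is the heart of the argument: $\psi(1)=0$, and $\psi'(r)=p\,r^{-2}\bigl(r^{p+1}-(d-1)\bigr)$ has its only zero at $r_\ast:=(d-1)^{1/(p+1)}$, which is $>1$ precisely because $d>2$. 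Hence $\psi$ strictly decreases on $(1,r_\ast)$, strictly increases on $(r_\ast,\infty)$, and $\psi(r)\to\infty$; so $\psi$ has a unique second zero $r^{\ast\ast}\in(r_\ast,\infty)$, is negative on $(1,r^{\ast\ast})$ and positive on $(r^{\ast\ast},\infty)$. Therefore $\Theta$ is strictly increasing on $(1,r^{\ast\ast})$ and strictly decreasing on $(r^{\ast\ast},\infty)$; combined with $\Theta(1)=0$ (which forces $\Theta(r^{\ast\ast})>0$) and $\Theta(r)\to-\infty$, this gives exactly one zero of $\Theta$ in $(1,\infty)$, lying in $(r^{\ast\ast},\infty)$. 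Undoing the substitution produces the unique root of $h$ in $(0,(1-\alpha)/d)$.

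The main obstacle is that the naive route is blocked: $h$ is genuinely not monotone on $(0,(1-\alpha)/d)$ (it inherits the non-monotone shape of $\Theta$), so uniqueness cannot be read off directly, and a purely local sign computation at a zero of $h$ does not suffice — it would require knowing in advance that the zero stays bounded away from the endpoint $c=(1-\alpha)/d$, which is not available a priori. The real work lies in choosing the reduction: the ratio variable $r$ together with the multiplier $c^{1/\theta}$, which simultaneously absorbs the blow-up of $\bar{I}(c)$ and of $F^-(b_c)$ in the infinite-mean range $\theta\le1$, so that unimodality of $\Theta$ becomes transparent through the elementary function $\psi$. Verifying $\Theta(r)\to-\infty$ and treating the $\theta=1$ form of \eqref{eq:Ibar:Par} separately are routine extras.
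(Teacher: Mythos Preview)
Your proof is correct and essentially matches the paper's: the same substitution $r=x_c=(1-\alpha)/c-(d-1)$ to $(1,\infty)$, the same auxiliary function up to sign (your $\Theta$ equals the paper's $-h_2$ once the factor $r-1$ is cleared), and a unimodality argument via a second-level derivative analysis together with the boundary values at $r=1$ and $r\to\infty$. The only cosmetic differences are that your integral form of $\Theta$ treats all $\theta>0$, including $\theta=1$, uniformly (the paper handles $\theta=1$ separately), and you reach unimodality by analyzing $\psi=-r^{p}\Theta'$ rather than computing $h_2''$ directly; the resulting critical point $r_\ast=(d-1)^{1/(p+1)}$ differs from the paper's inflection point $\frac{(d-1)(1+\theta)}{d+\theta-1}$, but both serve the identical purpose.
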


\subsubsection*{Practice}
Let us now focus on the case of $\Par(\theta)$ margins (see
\texttt{worst\_VaR\_hom(..., method="Wang.Par")}) and, in particular, how to
choose the initial interval $[c_l,c_u]$ in
Algorithm~\ref{alg:worst:VaR:Wang}. We first consider $c_l$. $\bar{I}$ satisfies
$\bar{I}(0)=\frac{1}{1-\alpha}\int_{\alpha}^1F^-(y)\,dy=\ES_\alpha(L)$, i.e.,
$\bar{I}(0)$ is the \emph{expected shortfall} of $L\sim F$ at confidence level
$\alpha$. If $L$ has a finite first moment, then $\bar{I}(0)$ is
finite. Therefore, $h(0)$ is finite (if $F^-(1)<\infty$) or $-\infty$ (if
$F^-(1)=\infty$). Either way, one can take $c_l=0$. However, if $L\sim F$ has an
infinite first moment (see, e.g., \cite{hofertwuethrich2012} or
\cite{chavezdemoulinembrechtshofert2014} for situations in which this can
happen), then $\bar{I}(0)=\infty$ and $F^-(1)=\infty$, so $h(0)$ is not defined;
this happens, e.g., if $F$ is $\Par(\theta)$ with $\theta\in(0,1]$. In such a
case, we are forced to choose $c_l\in(0,(1-\alpha)/d)$; see the following
proposition for how this can be done \emph{theoretically}. Concerning $c_u$,
note that l'Hospital's Rule implies that $\bar{I}(c_u)=F^-(1-(1-\alpha)/d)$ and
thus that $h((1-\alpha)/d)=0$. We thus have a similar problem (a root at the
upper endpoint of the initial interval) as for computing the dual
bound. However, here we can construct a suitable $c_u<(1-\alpha)/d$; see the
following proposition.
\begin{proposition}[Computing $c_l,c_u$ for $\Par(\theta)$ risks]\label{prop:cl:cu:Par}
  Let $F$ be the distribution function of a $\Par(\theta)$ distribution,
  $\theta>0$. Then $c_l$ and $c_u$ in Algorithm~\ref{alg:worst:VaR:Wang}
  can be chosen as
  \begin{align*}
    c_l=\begin{cases}
      \frac{(1-\theta)(1-\alpha)}{d},&\text{if}\,\ \theta\in(0,1),\\
      \frac{1-\alpha}{(d+1)^{\frac{e}{e-1}}+d-1},&\text{if}\,\ \theta=1,\\
      \frac{1-\alpha}{(d/(\theta-1)+1)^{\theta}+d-1},&\text{if}\,\ \theta\in(1,\infty),
    \end{cases}\quad
    c_u=\begin{cases}
      \frac{(1-\alpha)(d-1+\theta)}{(d-1)(2\theta+d)},&\text{if}\,\ \theta\neq1,\\
      \frac{1-\alpha}{3d/2-1},&\text{if}\,\ \theta=1.
    \end{cases}
  \end{align*}
\end{proposition}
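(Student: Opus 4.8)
The plan is to substitute the explicit $\Par(\theta)$ quantile function $F^-(y)=(1-y)^{-1/\theta}-1$ and, for $\theta\neq1$, the closed form \eqref{eq:Ibar:Par} of $\bar I$ into \eqref{eq:wang:h}. Writing $u=u(c):=(1-\alpha)-(d-1)c$, so that $1-a_c=u$, $1-b_c=c$ and $b_c-a_c=u-c$, the constants $-1$ cancel and
\begin{align*}
  h(c)=\frac{\theta\bigl(c^{1-1/\theta}-u^{1-1/\theta}\bigr)}{(1-\theta)(u-c)}-\frac{d-1}{d}u^{-1/\theta}-\frac1d c^{-1/\theta}\qquad(\theta\neq1),
\end{align*}
with a logarithm in place of the first term when $\theta=1$. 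For each of $\theta\in(0,1)$, $\theta=1$, $\theta\in(1,\infty)$ I would verify: (i)~$0<c_l<c_u<(1-\alpha)/d$; (ii)~$h(c_l)\le0$; (iii)~$h(c_u)>0$. Since $h$ is continuous on $(0,(1-\alpha)/d)$ and, by Proposition~\ref{prop:unique:root:Par}, has a unique zero $c^*$ there, (i)--(iii) and the intermediate value theorem show that $[c_l,c_u]$ brackets $c^*$ and lies strictly inside the admissible range required in Step~\ref{algo:wang:1} of Algorithm~\ref{alg:worst:VaR:Wang}; here $c_l>0$ is what matters in the infinite-mean cases $\theta\le1$, where $h(0)$ is undefined.

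Step (i) is routine: $c_u<(1-\alpha)/d$ amounts to $\theta(d-2)>0$, $c_l<(1-\alpha)/d$ is immediate in each case, and $c_l<c_u$ follows from elementary estimates (for $\theta\in(0,1)$ one checks that $d(d-1+\theta)-(1-\theta)(d-1)(2\theta+d)$ vanishes at $\theta=0$ and is increasing on $[0,1]$; for $\theta\ge1$ the two quantities are not close). For step (ii), clearing the positive factor $d|1-\theta|(u-c)$ in $h(c)\le0$ and using $c^{1-1/\theta}=c\cdot c^{-1/\theta}$, $u^{1-1/\theta}=u\cdot u^{-1/\theta}$ gives, for $\theta<1$,
\begin{align*}
  h(c)\le0\iff c^{-1/\theta}\bigl\{[\theta(d-1)+1]c-(1-\theta)u\bigr\}+u^{-1/\theta}\bigl\{(1-\theta)(d-1)c-(d-1+\theta)u\bigr\}\le0,
\end{align*}
the inequality being reversed for $\theta>1$. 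For $\theta\in(0,1)$ the value $c_l=\tfrac{(1-\theta)(1-\alpha)}{d}$ is precisely the one making the first brace vanish (as $[\theta(d-1)+1]c-(1-\theta)u=0\iff dc=(1-\theta)(1-\alpha)$), and the second brace is then negative since $c_l<u(c_l)$, so $h(c_l)<0$. For $\theta>1$ the value $c_l=\tfrac{1-\alpha}{(d/(\theta-1)+1)^{\theta}+d-1}$ is the one for which $u(c_l)/c_l=\bigl((d-1+\theta)/(\theta-1)\bigr)^{\theta}$ exactly; substituting, $h(c_l)\le0$ reduces to $(d-1+\theta)\bigl(\theta(d-1)+1\bigr)\ge(\theta-1)^2(d-1)$, whose left-minus-right side simplifies to $\theta d^2\ge0$. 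The case $\theta=1$ is analogous, with $\bar I(c)=\tfrac{1}{u-c}\log\tfrac uc-1$, $F^-(b_c)=c^{-1}-1$, and $u(c_l)/c_l=(d+1)^{e/(e-1)}$; the exponent $e/(e-1)$ enters via the sharp bound $\log x\le\tfrac{e-1}{e}x^{1/(e-1)}$ (equivalently $x^{e/(e-1)}\ge\tfrac{e}{e-1}x\log x$, with equality at $x=e^{e-1}$), applied with $x=d+1$.

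Step (iii) is the crux. Substituting $c=c_u=\tfrac{(1-\alpha)(d-1+\theta)}{(d-1)(2\theta+d)}$ gives $u(c_u)=(1-\alpha)\tfrac{\theta+1}{2\theta+d}$, hence $u(c_u)/c_u=\tfrac{(\theta+1)(d-1)}{d-1+\theta}$, and clearing denominators reduces $h(c_u)\ge0$ to
\begin{align*}
  \Bigl(\tfrac{(\theta+1)(d-1)}{d-1+\theta}\Bigr)^{1/\theta}\le\frac{2(d-1)(d-1+\theta)}{(d-1)^2+2\theta(d-1)+1}\qquad(\theta>1),
\end{align*}
the inequality being reversed for $\theta\in(0,1)$ and an equality at $\theta=1$; for $\theta=1$ the strictly smaller $c_u=\tfrac{1-\alpha}{3d/2-1}$ is used instead, for which $h(c_u)>0$ follows from the elementary bound $d\log(d/2)>\tfrac{3d}{2}-4+\tfrac2d$. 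The obstruction is that the displayed inequality is tight near $\theta=1$ (and for small $d$), so the first-order tangent-line estimate of $z\mapsto z^{1/\theta}$ at $z=1$ is not sharp enough --- it overshoots in the wrong direction. I would set $t=1/\theta$, $n=d-1$, and
\begin{align*}
  f(t):=t\log\frac{(1+t)n}{nt+1}-\log\frac{2n(nt+1)}{(n^2+1)t+2n},
\end{align*}
so that (for the general-formula $c_u$) $h(c_u)\ge0\iff f(t)(t-1)\ge0$; one has $f(0^+)=f(1)=0$ and
\begin{align*}
  f''(t)=\frac{(n-1)^2}{(nt+1)^2(1+t)}+\frac{1}{(1+t)^2}-\frac{(n^2+1)^2}{((n^2+1)t+2n)^2},
\end{align*}
with $f''(1)=-\bigl(\tfrac{2n}{(n+1)^2}-\tfrac12\bigr)^2<0$ for $n\ge2$; the task is to show that $f''$ changes sign exactly once on each of $(0,1)$ and $(1,\infty)$, which forces $f$ to dip once below $0$ on $(0,1)$ and to increase staying above $0$ on $(1,\infty)$, giving $f(t)(t-1)\ge0$. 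Making this shape analysis of $f$ rigorous --- pinning down the sign changes of $f''$ and of $f'$ --- is the main technical hurdle; the rest is bookkeeping.
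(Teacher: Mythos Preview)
Your overall strategy---verify $0<c_l<c_u<(1-\alpha)/d$, $h(c_l)\le0$, $h(c_u)>0$, then invoke continuity and Proposition~\ref{prop:unique:root:Par}---is exactly right, and your treatment of $c_l$ is a clean variant of the paper's (the paper bounds the transformed function $h_2$ of~\eqref{eq:def:h2} from below by a linear function and reads off its root; you verify the sign of $h(c_l)$ directly). Both work.

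Where you diverge from the paper is in step~(iii), and the difference matters. You correctly compute
\[
  \frac{u(c_u)}{c_u}=\frac{(d-1)(1+\theta)}{d-1+\theta}\quad(\theta\neq1),\qquad
  \frac{u(c_u)}{c_u}=\frac{d}{2}\quad(\theta=1),
\]
but then attempt a direct brute-force verification of $h(c_u)>0$, which leads you to the delicate function $f(t)$ and an incomplete shape analysis of $f''$. The paper avoids all of this by observing that the ratio above is \emph{precisely the unique inflection point of $h_2$} established in the proof of Proposition~\ref{prop:unique:root:Par}. Since that proof already gives $h_2(1)=0$, $h_2'(1)=0$, $h_2''(1)<0$, and a single sign change of $h_2''$ on $(1,\infty)$, the function $h_2$ is strictly concave on $(1,x_{\text{infl}}]$ and hence strictly negative there; in particular $h_2(x_{c_u})<0$, which (tracing the signs through $h_2=-(x-1)h_1$ and $h_1\propto h$) is exactly $h(c_u)>0$. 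One line, no $f$.

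So your proof is not wrong, but the $c_u$ part is left genuinely incomplete, and the missing ingredient is not more analysis of $f''$ but the recognition that $c_u$ was \emph{chosen} to be the inflection point in the transformed variable. Once you see this, step~(iii) is immediate and the ``main technical hurdle'' disappears.
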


In the following examples we briefly consider some experiments which have led to
several numerical hurdles we had to overcome when implementing
\texttt{worst\_VaR\_hom(..., method="Wang.Par")}; see also
the detailed vignette \texttt{VaR\_bounds} for how to reproduce them (e.g., Section 1.4).
\begin{example}[$h$, $\underline{\VaR}_\alpha(L^+)$ and $\bVaR_\alpha(L^+)$ for Wang's approach
  with $\Par(\theta)$ risks]\label{ex:hom:h:Wang:Par}
  As an example, consider $\Par(\theta)$ risks and the confidence level
  $\alpha=0.99$. Figure~\ref{fig:worst:VaR:hom:Wang:h} illustrates the objective
  function $h(c)$ as a function of $c\in(0,(1-\alpha)/d]$ (see
  \eqref{eq:wang:h}) for various $\theta$ and $d=8$ (left-hand side) and $d=100$
  (right-hand side). Non-positive values $h(c)$ have been omitted so that the
  y-axis could be given in log-scale; note how steep $h$ can be (we have
  evaluated $h$ at $2^{13}+1$ points equally spaced between 0 and
  $(1-\alpha)/d$), especially for small $\theta$ (and large $d$). For an even
  larger number of evaluation points $c$, one sees that the first and last
  positive values of $h$ indeed reach down towards 0.  Overall, the objective
  function can be evaluated without numerical problems on $(0,(1-\alpha)/d]$ for
  our chosen $\theta$ and $d$.

  Figure~\ref{fig:worst:VaR:hom:Wang} displays $\underline{\VaR}_\alpha(L^+)$
  and $\bVaR_\alpha(L^+)$ as a function in $1-\alpha$ for various $\theta$ and
  $d=8$ (left-hand side) and $d=100$ (right-hand side).
  \begin{figure}[htbp]
    \centering
    \includegraphics[width=0.48\textwidth]{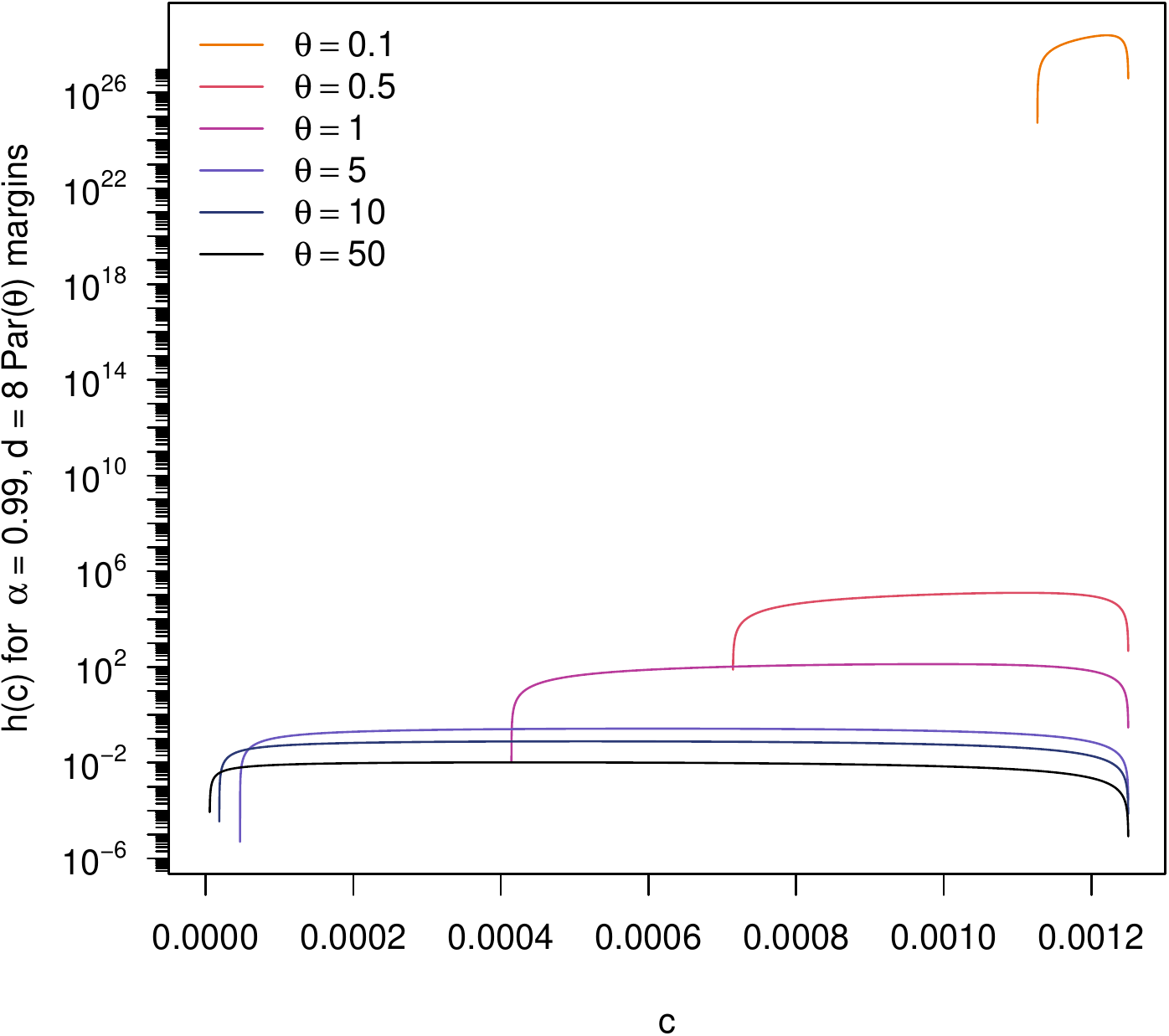}%
    \hfill
    \includegraphics[width=0.48\textwidth]{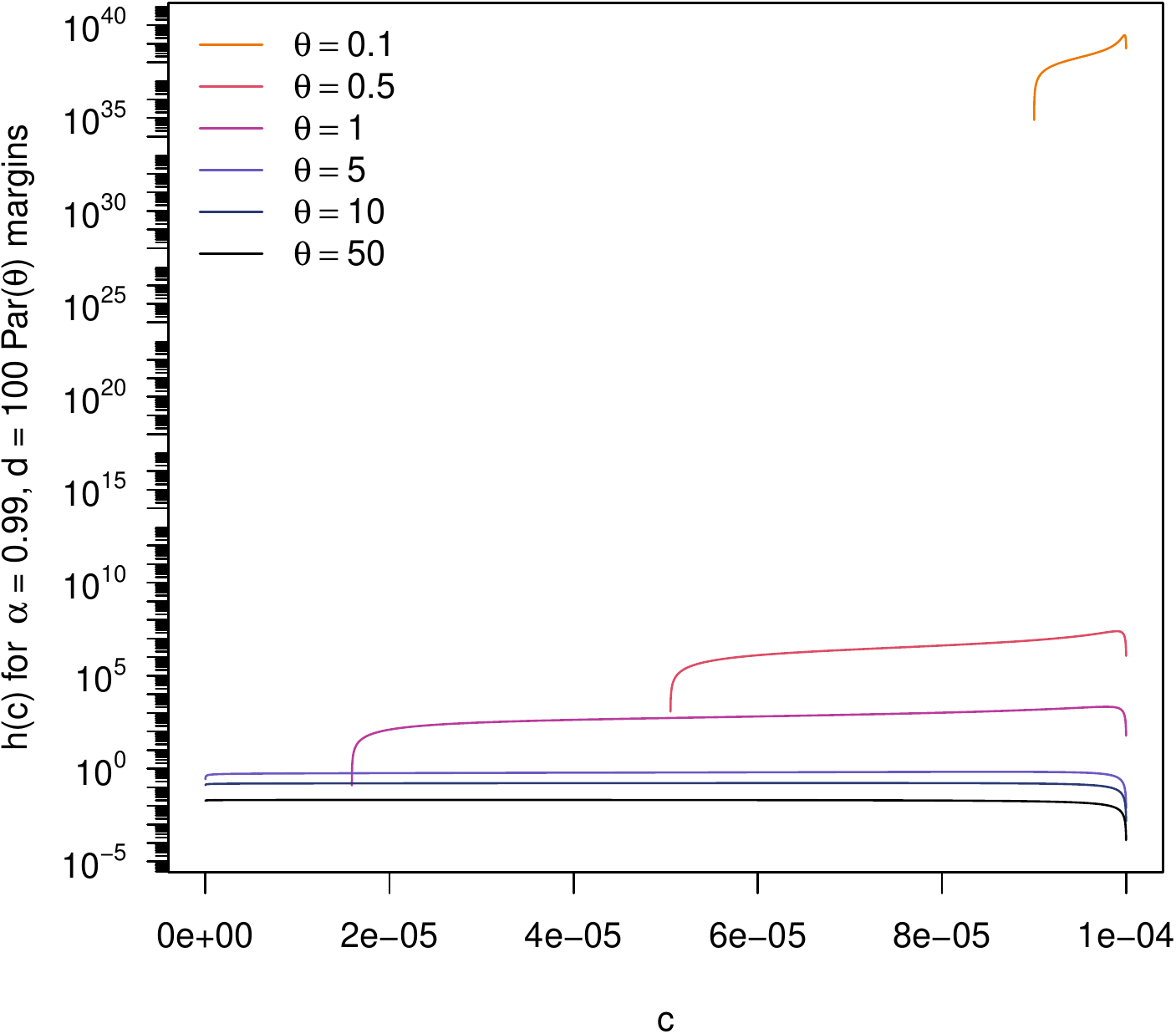}%
    \caption{Objective function $h(c)$ for $\alpha=0.99$, $F$ being
      $\Par(\theta)$, $d=8$ (left-hand side) and
      $d=100$ (right-hand side).}
    \label{fig:worst:VaR:hom:Wang:h}
  \end{figure}
  \begin{figure}[htbp]
    \centering
    \includegraphics[width=0.48\textwidth]{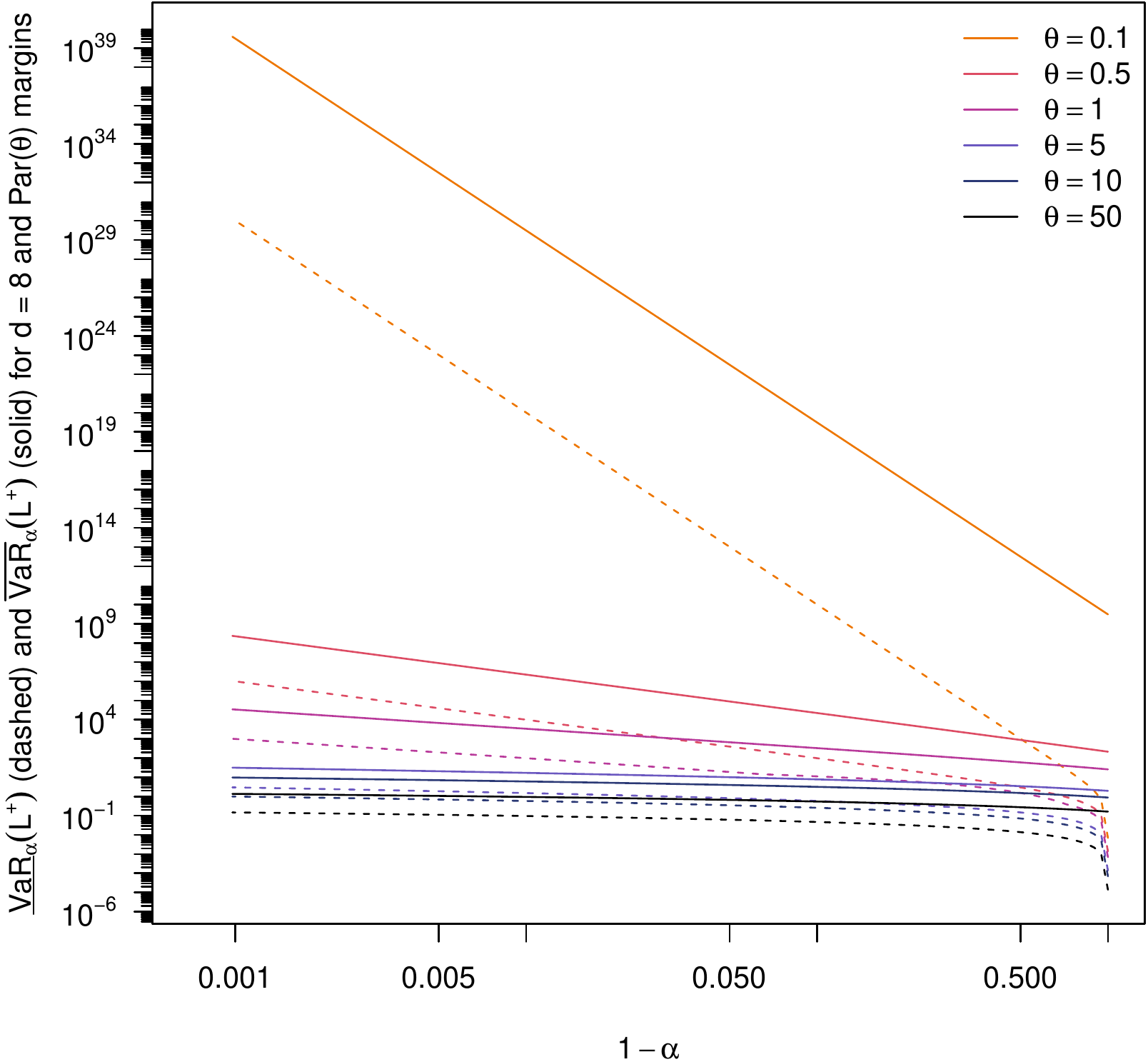}%
    \hfill
    \includegraphics[width=0.48\textwidth]{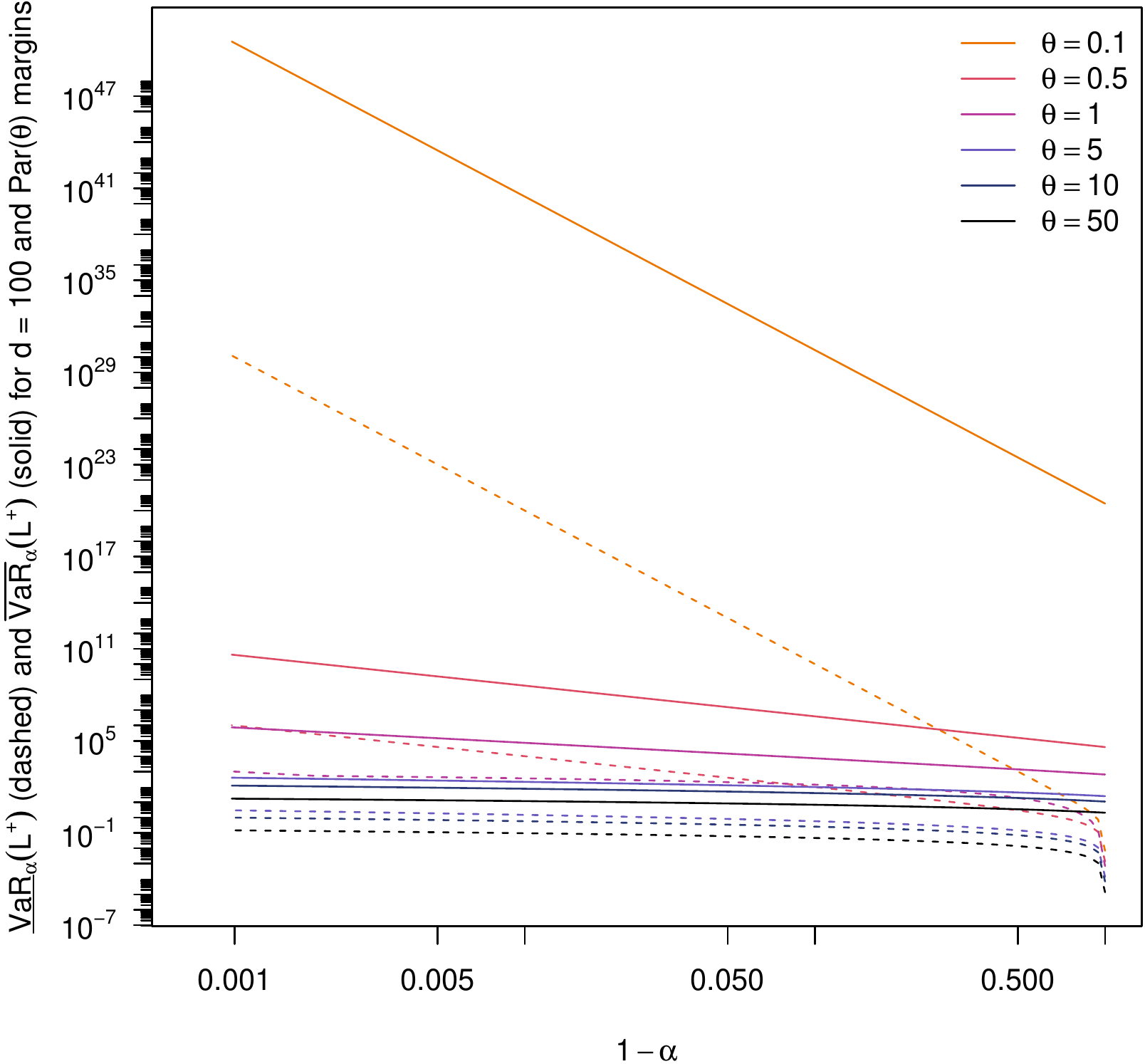}%
    \caption{$\underline{\VaR}_\alpha(L^+)$ and $\bVaR_\alpha(L^+)$ as functions of $1-\alpha$ for $F$ being $\Par(\theta)$, $d=8$ (left-hand side) and $d=100$ (right-hand side).}
    \label{fig:worst:VaR:hom:Wang}
  \end{figure}

  For obtaining numerically reliable results (over these wide ranges of
  parameters; indeed we tested much higher dimensions as well), one has to be
  careful when computing the root of $h$ for $c\in(0,(1-\alpha)/d)$. First,
  choosing a smaller root-finding tolerance is
  crucial. Figure~\ref{fig:hom:comparison} below (see also
  Example~\ref{ex:hom:compare}) shows what silently happens if this is not
  considered (our procedure chooses MATLAB's default $2.2204\cdot 10^{-16}$
  instead of the much larger \texttt{uniroot()} default
  \texttt{.Machine\$double.eps\^{}0.25}). Second, it turned out to be required
  to adjust the \emph{theoretically valid} initial interval described in
  Proposition~\ref{prop:cl:cu:Par} further in order to guarantee that $h$ is
  \emph{numerically} of opposite sign at the interval end points. In particular,
  \texttt{worst\_VaR\_hom(..., method="Wang.Par")} chooses $c_l/2$ as lower end
  point (with $c_l$ as in Proposition~\ref{prop:cl:cu:Par}) in case
  $\theta\neq 1$.

  These problems are described in detail in Section 1.4 of the vignette
  \texttt{VaR\_bounds}, where we also show that transforming the auxiliary
  function $h$ to a root-finding problem on $(1,\infty)$ as described in the
  proof of Proposition~\ref{prop:cl:cu:Par} does not require a smaller
  root-finding tolerance but also an extended initial interval and, furthermore,
  faces a cancellation problem (which can be solved, though); see also the
  left-hand side of Figure~\ref{fig:worst:VaR:hom:Wang:Par:numerics}, where we
  compare this approach to \texttt{worst\_VaR\_hom(..., method="Wang.Par")}
  \emph{after} fixing these numerical issues.

  In short, one should be very careful when implementing supposedly ``explicit
  solutions'' for computing $\underline{\VaR}_\alpha(L^+)$ or
  $\bVaR_\alpha(L^+)$ in the homogeneous case with $\Par(\theta)$ (and most
  likely also other) margins.
\end{example}

\begin{example}[Comparison of the approaches for $\Par(\theta)$ risks]\label{ex:hom:compare}
  Again let us consider $\Par(\theta)$ risks and the confidence level
  $\alpha=0.99$. Figure~\ref{fig:hom:comparison} compares Wang's approach (using
  numerical integration; see \texttt{worst\_VaR\_hom(..., method="Wang")}), Wang's
  approach (with an analytical formula for the integral $\bar{I}(c)$ but
  \texttt{uniroot()}'s default tolerance; see the vignette \texttt{VaR\_bounds}),
  Wang's approach (with an analytical formula for the integral $\bar{I}(c)$ and
  auxiliary function $h$ transformed to $(1,\infty)$; see the vignette
  \texttt{VaR\_bounds}), Wang's approach (with analytical formula for the integral
  $\bar{I}(c)$, smaller \texttt{uniroot()} tolerance and adjusted initial
  interval; see \texttt{worst\_VaR\_hom(..., method="Wang.Par")}), and the lower
  and upper bounds obtained from the RA (with absolute tolerance 0); see
  Section~\ref{sec:RA} and \texttt{RA()}. All of the results are divided by the
  values obtained from the dual bound approach to facilitate comparison. The two
  plots (for $d=8$ and $d=100$, respectively) show that comparable results are
  obtained by the different approaches and why it is important to use a smaller
  tolerance for \texttt{uniroot()} in Wang's approach.
  \begin{figure}[htbp]
    \centering
    \includegraphics[width=0.48\textwidth]{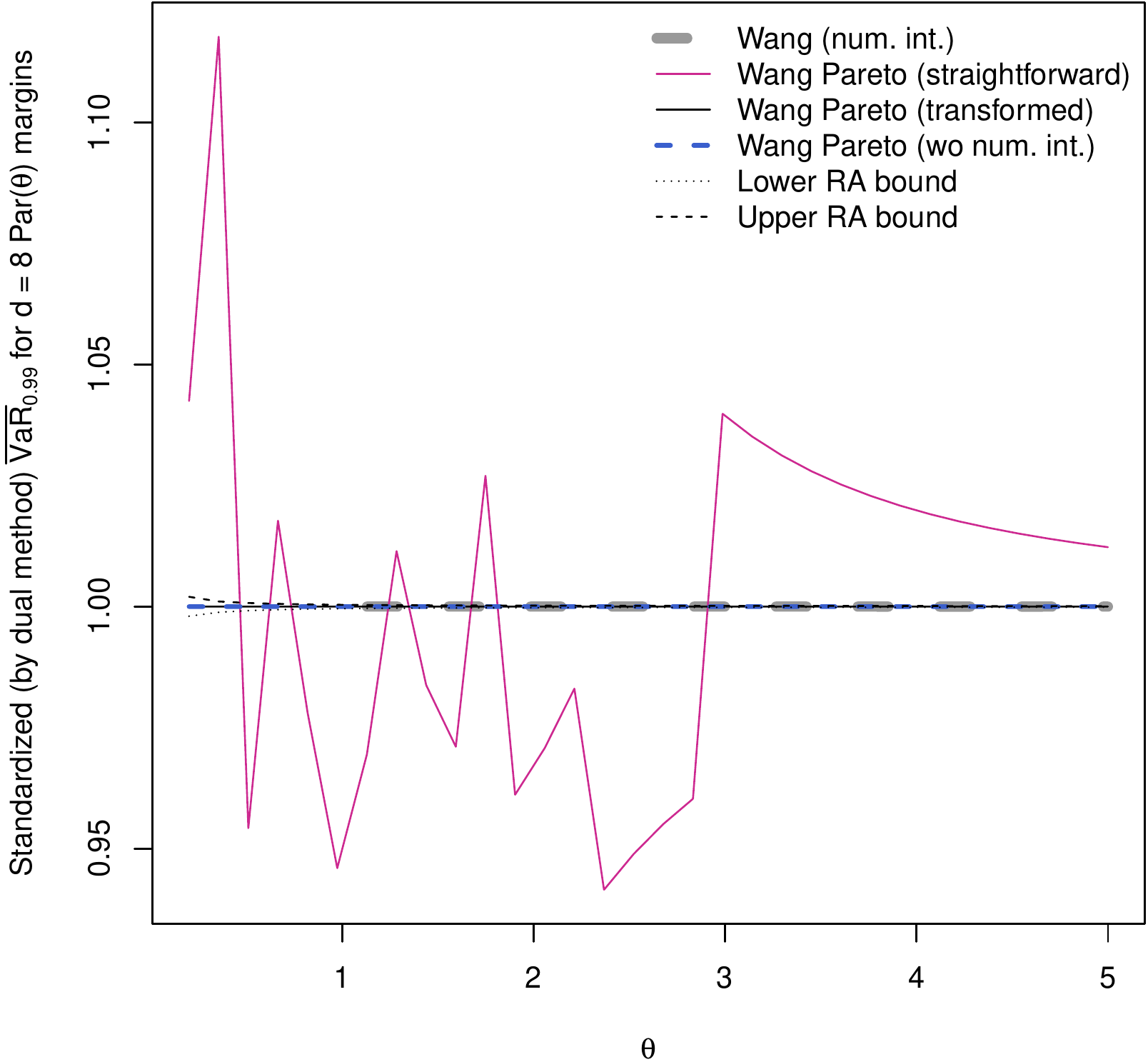}%
    \hfill
    \includegraphics[width=0.48\textwidth]{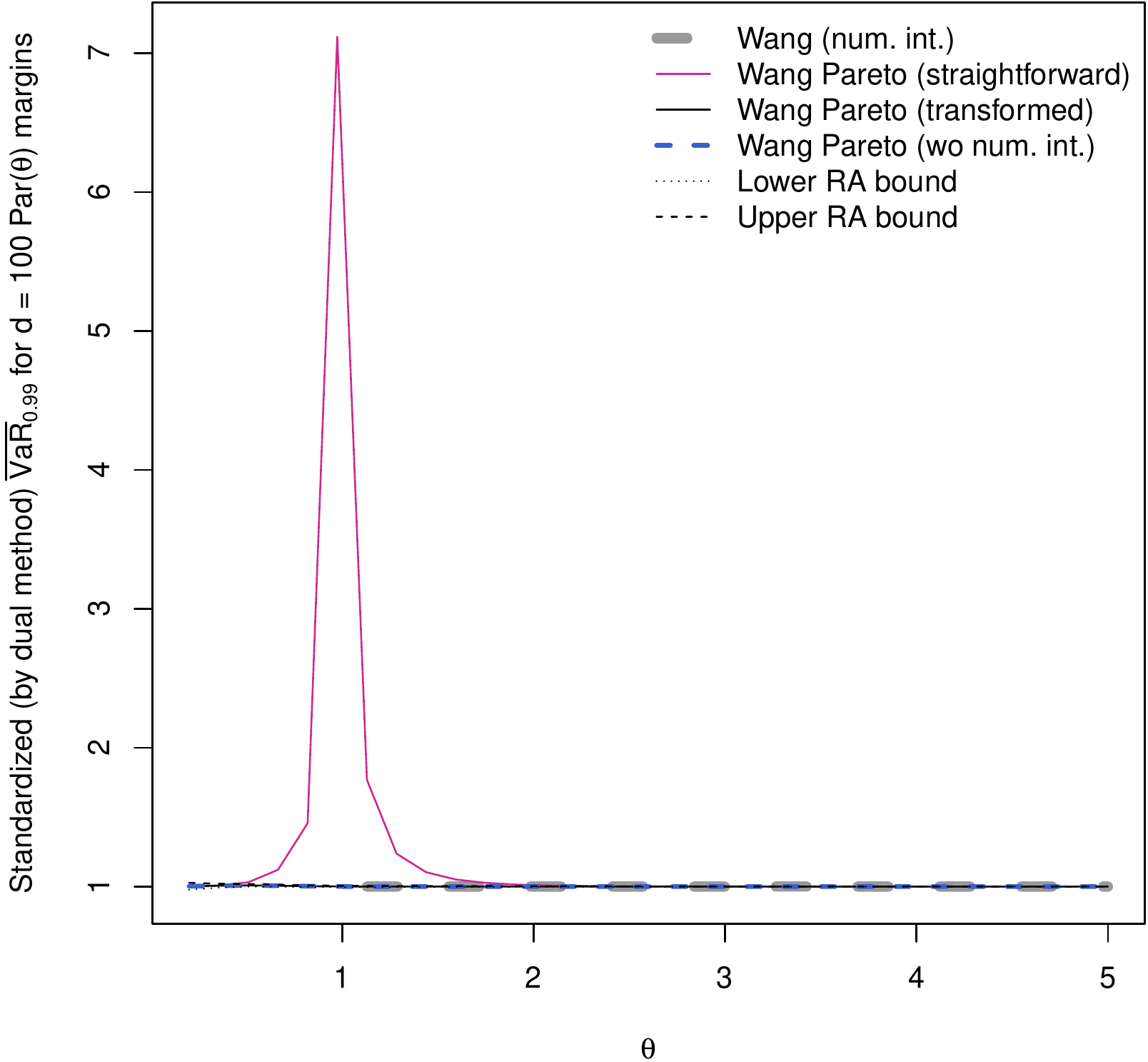}%
    \caption{Comparisons of Wang's approach (using numerical
      integration; see \texttt{worst\_VaR\_hom(..., method="Wang")}),
      Wang's approach (with an analytical formula for the integral
      $\bar{I}(c)$ but \texttt{uniroot()}'s default tolerance; see the vignette \texttt{VaR\_bounds}),
      Wang's approach (with an analytical formula for the integral $\bar{I}(c)$
      and auxiliary function $h$ transformed to $(1,\infty)$; see the vignette \texttt{VaR\_bounds}), 
      Wang's approach (with analytical formula for the integral
      $\bar{I}(c)$, smaller \texttt{uniroot()} tolerance and adjusted initial interval; see
      \texttt{worst\_VaR\_hom(..., method="Wang.Par")}),
      and the lower and upper bounds obtained from the RA;
      all of the results are divided by the values obtained from
      the dual bound approach to facilitate comparison.
      The left-hand side shows the case $d=8$, the right-hand side $d=100$.}
    \label{fig:hom:comparison}
  \end{figure}
\end{example}

Let us again stress how important the initial interval $[c_l,c_u]$ is. One could
be tempted to simply choose $c_u=(1-\alpha)/d$ and force the auxiliary function
$h$ to be of opposite sign at $c_u$, e.g., by setting $h(c_u)$ to
\texttt{.Machine\$double.xmin}, a positive but small
number. Figure~\ref{fig:worst:VaR:hom:Wang:wrong:right:adjustment} shows graphs
similar to the left-hand sides of Figures~\ref{fig:worst:VaR:hom:Wang} (but
$\bVaR_\alpha(L^+)$ only) and \ref{fig:hom:comparison} (standardized with
respect to the upper bound obtained from the RA). In particular,
$\bVaR_\alpha(L^+)$ is not monotone in $\alpha$ anymore (see the left-hand side
of Figure~\ref{fig:worst:VaR:hom:Wang:wrong:right:adjustment}) and the computed
$\bVaR_\alpha(L^+)$ values are not correct anymore (see the right-hand side of
Figure~\ref{fig:worst:VaR:hom:Wang:wrong:right:adjustment}).
\begin{figure}[htbp]
  \centering
  \includegraphics[width=0.48\textwidth]{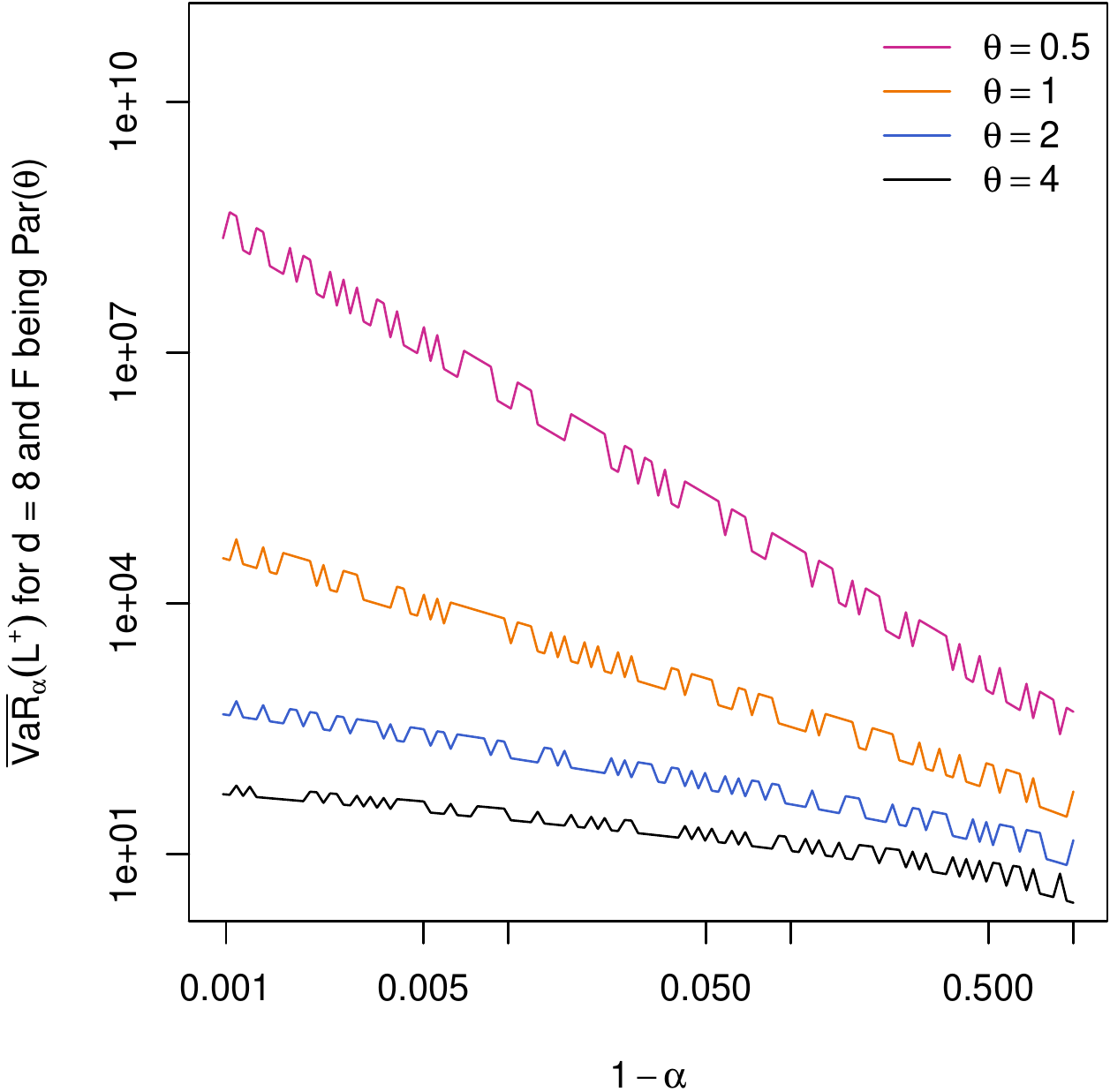}%
  \hfill
  \includegraphics[width=0.48\textwidth]{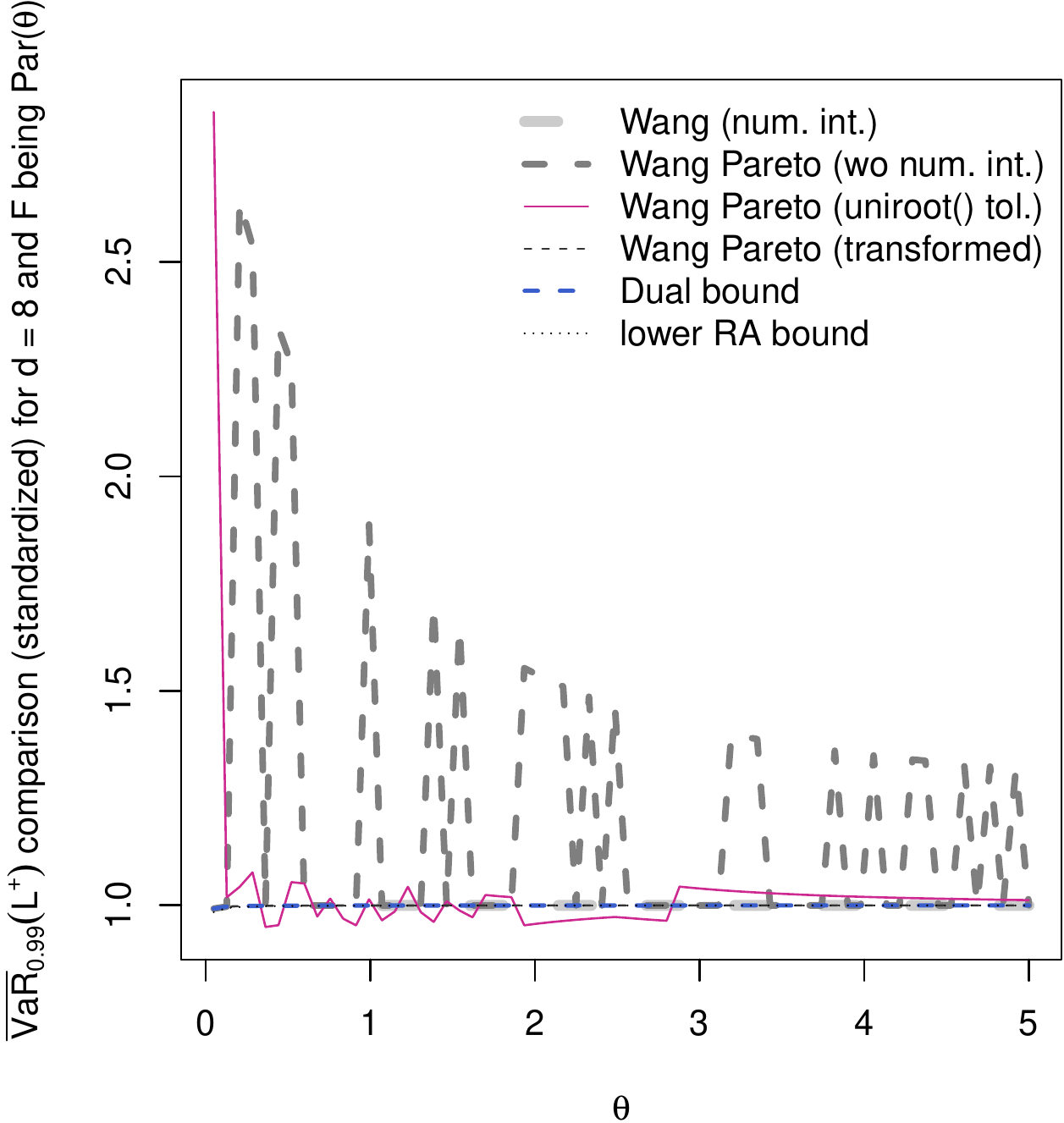}%
  \caption{Figures corresponding to the left-hand side of
    Figures~\ref{fig:worst:VaR:hom:Wang} ($\bVaR_\alpha(L^+)$ only) and
    \ref{fig:hom:comparison} (standardized with respect to the upper bound
    obtained from the RA) but for $h((1-\alpha)/d)$ adjusted to
    \texttt{.Machine\$double.xmin}.}
  \label{fig:worst:VaR:hom:Wang:wrong:right:adjustment}
\end{figure}

After carefully considering all the numerical issues, we can now look at
$\underline{\VaR}_\alpha(L^+)$ and $\bVaR_\alpha(L^+)$ from a different
perspective. The right-hand side of
Figure~\ref{fig:worst:VaR:hom:Wang:Par:numerics} shows
$\underline{\VaR}_\alpha(L^+)$ and $\bVaR_\alpha(L^+)$ as functions in the
dimension $d$. The linearity of $\bVaR_\alpha(L^+)$ in the log-log scale suggests
that $\bVaR_\alpha(L^+)$ is actually a power function in $d$. To the best of our
knowledge, this is not known (nor theoretically justified) yet.
\begin{figure}[htbp]
  \centering
  \includegraphics[width=0.48\textwidth]{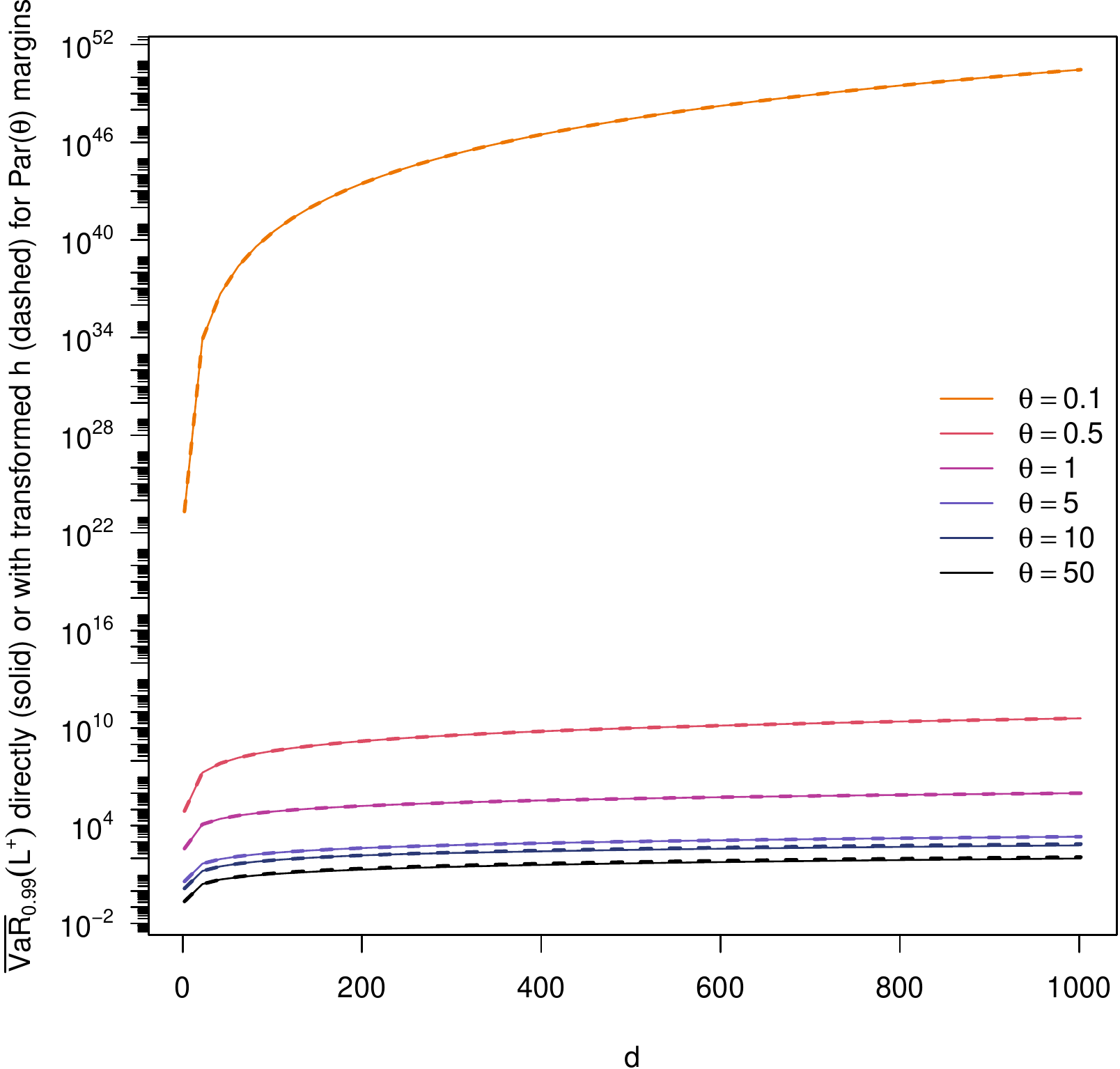}%
  \hfill
  \includegraphics[width=0.48\textwidth]{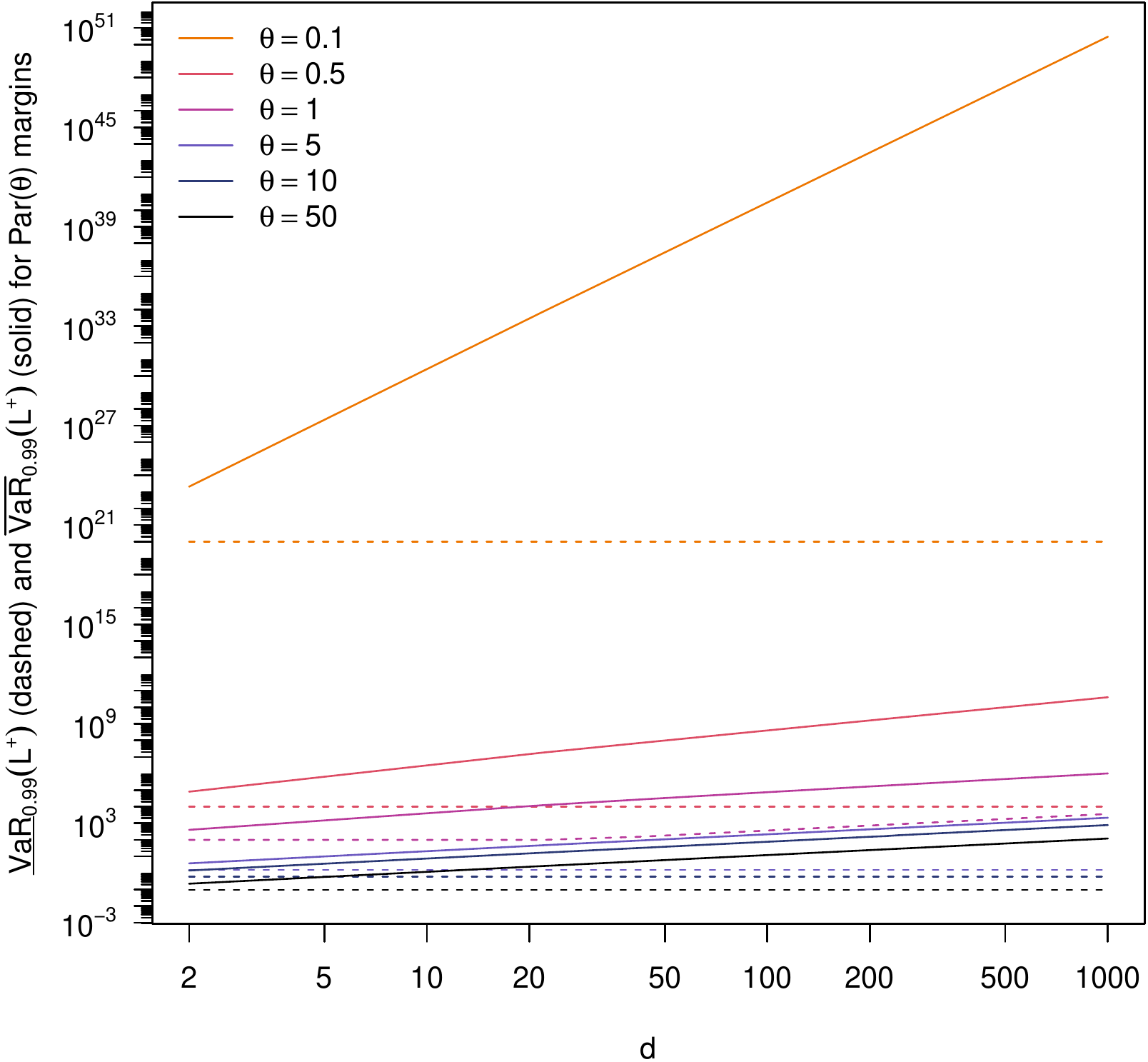}%
  \caption{A comparison of $\bVaR_\alpha(L^+)$ computed with \texttt{worst\_VaR\_hom(..., method="Wang.Par")}
    (solid line) and with the approach based on transforming the auxiliary
    function $h$ to a root-finding problem on $(1,\infty)$ (dashed line) as described in the
    proof of Proposition~\ref{prop:cl:cu:Par} (left-hand
    side). $\underline{\VaR}_\alpha(L^+)$ (dashed line) and $\bVaR_\alpha(L^+)$
    (solid line) as functions in $d$ on log-log scale (right-hand side).}
  \label{fig:worst:VaR:hom:Wang:Par:numerics}
\end{figure}

\section{The Rearrangement Algorithm}\label{sec:RA}
We now consider the RA for computing $\underline{\VaR}_\alpha(L^+)$ and $\bVaR_\alpha(L^+)$ in the inhomogeneous
case; as before, we mainly focus on $\bVaR_\alpha(L^+)$ here.

\subsection{About the algorithm}

One of the early works on finding bounds for $\VaR_\alpha(L^+)$ (including a
proof of their sharpness) can be found in \cite{makarov1982}, who provides bounds
on the distribution function of the sum of two random variables and bounds on
$\VaR_\alpha(L^+)$ thus follow. Later on, \cite{firporidder2010} prove these results using
copula theory, introducing dependence structures into the above framework and
extend Makarov's results to include an arbitrary increasing continuous
aggregation function (they do not prove the sharpness of the bounds,
though). \cite{williamsondowns1990} develop new methods for calculating convolutions
and dependency bounds for the distributions of functions of random variables; they
use lower and upper approximations to the desired distribution, containing
the representation error, and provide bounds on the errors.

\cite{denuitgenestmarceau1999} extend the above two-dimensional frameworks and
show how to compute bounds on the distribution function
of $L^+=L_1+\dots+L_d$ for the $d \ge 3$ case.
In a similar work, \cite{cossettedenuitmarceau2002} further develop results
about $(L_1,L_2)$ by assuming additional information about the correlation
structure of $(L_1,L_2)$. They further extend their results to the general
multivariate case and propose bounds for continuous and componentwise monotone
functions in $L_1,\dots, L_d$, assuming that the only available information on
$L_j$ is its distribution function $F_j$, $j\in\{1,\dots,d\}$. By relaxing some
of the continuity assumptions with respect to the aggregation function,
\cite{embrechts2003using} provide a generalization of these results
using copula theory.
In the latter paper, it is shown that without any prior
information on the dependence structure, only bounds for the
distribution function of the sum of the risks can be found and the problem of
the \emph{sharpness} of these bounds remained open when $d\ge 2$.

\cite{embrechtspuccetti2006b} provide better bounds on $F_{L^+}$ in the
\emph{homogeneous case} $F_1=\dots=F_d=F$, see Section~\ref{sec:dual:bd}, based
on the duality result of \cite{rueschendorf1982} for a continuous distribution
function $F$. The assumption of equal margins is rather restrictive,
especially for large $d$. To address this problem, \cite{embrechtspuccetti2006a}
extend the dual bound approach to general portfolios and describe a numerical
procedure to compute a lower bound for $F_{L^+}$ for an inhomogeneous portfolio of risks.  The
shortcoming of this method is 
that it requires the application of a global optimization algorithm for which there is no
guarantee of convergence to a global optimum
in a reasonable and predictable
amount of time; more importantly the quality of performance of many of these optimization
procedures is not well understood and in general the performance
of such algorithms deteriorates as $d$ increases. For these reasons the application of this method
for $d \geq 50$ becomes intractable in some cases.

The above problems can be studied in the context of \emph{completely mixable
  matrices}, i.e., matrices such that there exists a collection of permutations
acting on the columns which result in a constant row sum. This concept of
complete mixability is introduced and discussed in \cite{wang2011complete}. If a
matrix is not completely mixable, determining the smallest maximal and largest
minimal row sums is of interest to find $\underline{\VaR}_{\alpha}(L^+)$ and
$\bVaR_{\alpha}(L^+)$, respectively (or at least approximations of such). These \emph{minimax}
and \emph{maximin problems}, respectively, are in turn connected to discrete
approximations of the marginal quantile functions. \cite{haus2014} points out
that such approaches for computing $\underline{\VaR}_{\alpha}(L^+)$ and
$\bVaR_{\alpha}(L^+)$ are related to the multidimensional bottleneck assignment
problem and complete mixability is in general $\mathcal{NP}$-complete. As a
result, convergence to an optimal solution in polynomial time is not guaranteed.

If we discretize the tail of each of the $d$ risk factors using $N$ points, the
underlying space over which the above problems are analyzed becomes an
$N \times d$ matrix and enumeration of the total number of possible matrices
arising from permuting all but one column becomes intractable in applications as
there are $(N!)^{d-1}$ possible
matrices. 
One can easily observe that for a portfolio of only 10 positions and $N=20$, the
total number of such matrices is $(20!)^9\in\mathcal{O}(10^{165})$. Note that
the choices of both $N=20$ and $d=10$ are extremely conservative and for
illustrative purposes only; in practice $N$ can easily be as large as $1000$ to
$100,\!000$ and many portfolios consist of at least 20 to 40 instruments; larger
financial institutions can have portfolios with several thousand instruments.

As noted earlier, the complexity of the optimization procedures required for finding dual bounds,
given arbitrary marginal distributions along with a high run time were the main drawbacks
of using the dual bounds approach for obtaining a lower and upper bound
for $\VaR_\alpha(L^+)$. As a result, \cite{puccettirueschendorf2012} propose the
Rearrangement Algorithm (RA) to tackle these issues. The initial idea underlying
the RA and the numerical approximation introduced in
\cite{puccettirueschendorf2012} for calculating $\underline{\VaR}_{\alpha}(L^+)$
and $\bVaR_{\alpha}(L^+)$ is due to \cite{rueschendorf1983a} and
\cite{rueschendorf1983b}, respectively. The higher accuracy of the RA
(theoretically still an open question) along with its simple implementation
compared to the previous methods makes the RA an attractive alternative for
obtaining $\underline{\VaR}_{\alpha}(L^+)$ and $\bVaR_{\alpha}(L^+)$
when (only) the marginal loss distributions $F_1,\dots,F_d$ are known. In the
following subsections we look at how the RA works and analyze its performance using
various test cases.

\subsection{How the Rearrangement Algorithm works}

The RA can be applied to approximate the best Value-at-Risk
$\underline{\VaR}_{\alpha}(L^+)$ or the worst Value-at-Risk
$\bVaR_{\alpha}(L^+)$ for any set of marginals $F_j$, $j\in\{1,\dots,d\}$.  In
what follows our focus is mainly on $\bVaR_{\alpha}(L^+)$; our implementation
\texttt{RA()} in the \R\ package \texttt{qrmtools} also addresses
$\underline{\VaR}_{\alpha}(L^+)$. To understand the algorithm, note that two
columns $\bm{a},\bm{b}\in\IR^N$ are called \emph{oppositely ordered} if for all
$i,j\in\{1,\dots,N\}$ we have $(a_i-a_j)(b_i-b_j) \le 0$. Given a number $N$ of
discretization points of the marginal quantile functions $F_1^-,\dots,F_d^-$
above $\alpha$ (see Steps~\ref{RA:mat:low} and \ref{RA:mat:up} of Algorithm~\ref{algo:RA} below),
the RA constructs two $(N, d)$-matrices, denoted by $\underline{X}^{\alpha}$ and
$\overline{X}^{\alpha}$; the first matrix aims at
constructing an approximation of $\bVaR_{\alpha}(L^+)$ from below, the second
matrix is used to construct an approximation of $\bVaR_{\alpha}(L^+)$ from
above. Separately for each of these matrices, the RA iterates over its columns
and permutes each of them so that it is oppositely ordered to the sum of all
other columns. This is repeated until the minimal row sum
\begin{align*}
  s(X)=\min_{1 \le i \le N}\sum_{1\le j \le d} x_{ij}
\end{align*}
(for $X=(x_{ij})$ being one of the said $(N,d)$-matrices) changes by less than a given
\emph{(convergence) tolerance} $\eps\ge0$. The RA for $\bVaR_{\alpha}(L^+)$ thus
aims at solving the maximin problem. The intuition behind this is to
minimize the variance of the conditional distribution of
$L^+|L^+>F_{L^+}^-(\alpha)$ to concentrate more of the $1-\alpha$ mass of
$F_{L^+}$ in its tail. This pushes $\VaR_{\alpha}(L^+)$ further up.
As \cite{embrechtspuccettirueschendorf2013} state, one then typically ends up
with two matrices whose minimal row sums are close to each other and roughly
equal to $\bVaR_{\alpha}(L^+)$. Note that if one iteration over all columns of
one of the matrices does not lead to any change in that matrix, then each column
of the matrix is oppositely ordered to the sum of all others and thus there is
also no change in the minimal row sum (but the converse is not necessarily true,
see below).

The version of the RA given below contains slightly more information than in
\cite{embrechtspuccettirueschendorf2013}; e.g., how infinite quantiles are dealt
with. For more features of the actual implementation which is an improved
version of the one given below (see Section~\ref{sec:RA:our:impl}), see \texttt{RA()} and the underlying
workhorse \texttt{rearrange()}. This includes, e.g., a parameter
\texttt{max.ra} determining the maximal number of columns to rearrange or the
choice $\eps=$\texttt{(abstol=)NULL} to iterate until each column is oppositely
ordered to the sum of all others. The latter is typically (by far) not implied
by $\eps=0$, but does not bring better accuracy (see, e.g., the application
discussed in the vignette \texttt{VaR\_bounds}) and is typically very time-consuming (hence the
introduction of \texttt{max.ra}).
\begin{algorithm}[RA for computing $\bVaR_{\alpha}(L^+)$]\label{algo:RA}
  \begin{enumerate}
  \item Fix a confidence level $\alpha\in(0,1)$, marginal quantile functions
    $F_1^-,\dots,F_d^-$, an integer $N\in\IN$ and the desired (absolute)
    convergence tolerance $\eps\ge0$.
  \item Compute the lower bound:
    \begin{enumerate}
    \item\label{RA:mat:low} Define the matrix
      $\underline{X}^{\alpha}=(\underline{x}_{ij}^{\alpha})$ for
      $\underline{x}_{ij}^{\alpha} =
      F^-_j\bigl(\alpha+\frac{(1-\alpha)(i-1)}{N}\bigr)$, $i\in\{1,\dots,N\}$,
      $j\in\{1,\dots,d\}$.
    \item\label{RA:perm:low} Permute randomly the elements in each column of $\underline{X}^{\alpha}$.
    \item\label{RA:step:4:low} For $1\le j\le d$, permute the $j$-th column of the matrix
      $\underline{X}^{\alpha}$ so that it becomes oppositely ordered to the sum of
      all other columns. Call the resulting matrix $\underline{Y}^{\alpha}$.
    \item\label{RA:step:5:low} Repeat Step~\ref{RA:step:4:low} until $s(\underline{Y}^{\alpha})-s(\underline{X}^{\alpha})\le\eps$, then set $\underline{s}_N = s(\underline{Y}^{\alpha})$.
    \end{enumerate}
  \item Compute the upper bound:
    \begin{enumerate}
    \item\label{RA:mat:up} Define the matrix $\overline{X}^{\alpha}=(\overline{x}_{ij}^{\alpha})$
      for
      $\overline{x}_{ij}^{\alpha}=F^-_j\bigl(\alpha+\frac{(1-\alpha)i}{N}\bigr)$,
      $i\in\{1,\dots,N\}$, $j\in\{1,\dots,d\}$. If (for $i=N$ and) for any
      $j\in\{1,\dots,d\}$, $F^-_j(1)=\infty$, adjust it to
      $F^-_j\bigl(\alpha+\frac{(1-\alpha)(N-1/2)}{N}\bigr)$.
    \item\label{RA:perm:up} Permute randomly the elements in each column of $\overline{X}^{\alpha}$.
    \item\label{RA:step:4:up} For $1\le j\le d$, iteratively rearrange the $j$-th column of the matrix
      $\overline{X}^{\alpha}$ so that it becomes oppositely ordered to the sum of
      all other columns. Call the resulting matrix $\overline{Y}^{\alpha}$.
    \item\label{RA:step:5:up} Repeat Step~\ref{RA:step:4:up} until $s(\overline{Y}^{\alpha})-s(\overline{X}^{\alpha})\le\eps$, then set $\overline{s}_N = s(\overline{Y}^{\alpha})$.
    \end{enumerate}
  \item Return $(\underline{s}_N,\ \overline{s}_N)$.
  \end{enumerate}
\end{algorithm}

As mentioned before, the main feature of the RA is to iterate over all columns
and oppositely order each of them with respect to the sum of all others (see
Steps~\ref{RA:step:4:low} and \ref{RA:step:4:up}). This procedure aims at reducing the
variance of the row sums with each rearrangement. Note that it does not necessarily
reach an optimal solution of the maximin problem (see, e.g.,
\cite[Lemma~6]{haus2014} for a counter-example) and thus the convergence
$|\overline{s}_N-\underline{s}_N|\to 0$ is not guaranteed. In order to reduce
the possibility of this happening in practice, the randomization of the initial
input in Steps~\ref{RA:perm:low} and \ref{RA:perm:up} has been put in place;
however, see our study in Section 2.3 of the vignette \texttt{VaR\_bounds}
concerning the possible rearranged output matrices and the influence of the
underlying sorting algorithm on the outcome.

\subsection{Conceptual and numerical improvements}\label{sec:RA:our:impl}
Some words of warning are in order here. Besides the confidence level $\alpha$
and the marginal quantile functions $F_1^-,\dots,F_d^-$, RA relies on two
sources of input, namely $N\in\IN$ and $\eps\ge0$, for which
\cite{embrechtspuccettirueschendorf2013} do not provide guidance on reasonable
defaults. Concerning $N$, it obviously needs to be ``sufficiently large'', but a
practitioner is left alone with such a choice. Another issue is the use of
the \emph{absolute} tolerance $\eps$ in the algorithm. There are two problems. The first
problem is that it is more natural to use a relative instead of an absolute tolerance
in this context. Without (roughly) knowing the minimal row sum in
Steps~\ref{RA:step:5:low} and \ref{RA:step:5:up}, a pre-specified absolute tolerance
does not guarantee that the change in the minimal row sum from
$\underline{X}^{\alpha}$ to $\underline{Y}^{\alpha}$ is of the right order (and
such order depends at least on $d$ and the chosen quantile functions). If $\eps$
is chosen too large, the computed bounds $\underline{s}_N$ and $\overline{s}_N$
would carry too much uncertainty, whereas if it is too small, an
unnecessarily long run time results; the latter seems to be the case for
\cite[Table~3]{embrechtspuccettirueschendorf2013}, where the chosen $\eps=0.1$
is roughly 0.000004\% of the computed $\bVaR_{0.99}(L^+)$.

The second problem is that the absolute tolerance $\eps$ is only used for
checking \emph{individual} ``convergence'' of $\underline{s}_N$ and of
$\overline{s}_N$. It does not guarantee that $\underline{s}_N$ and
$\overline{s}_N$ are sufficiently close to obtain a reasonable approximation to
$\bVaR_{\alpha}(L^+)$. We are aware of the theoretical hurdles underlying the
algorithm which are still open questions at this point (e.g., the probability of
convergence of $\underline{s}_N$ and $\overline{s}_N$ to $\bVaR_{\alpha}(L^+)$
or that $\bVaR_{\alpha}(L^+)\le\overline{s}_N$ for sufficiently large $N$), but
from a computational point of view one should still check that $\underline{s}_N$
and $\overline{s}_N$ are close to each other. Also, the algorithm should return
convergence and other useful information, e.g., the \emph{relative rearrangement
  range} $|(\overline{s}_N-\underline{s}_N)/\overline{s}_N|$, the actual
individual absolute tolerances reached when computing $\underline{s}_N$
and $\overline{s}_N$, the number of column rearrangements used,
logical variables indicating whether the individual absolute tolerances have
been reached, the number of column rearrangements for $\underline{s}_N$
and $\overline{s}_N$, the row sums computed after each column rearrangement,
the constructed input matrices $\underline{X}^{\alpha}$, $\overline{X}^{\alpha}$
and the corresponding rearranged, final matrices $\underline{Y}^{\alpha}$,
$\overline{Y}^{\alpha}$; see \texttt{RA()} in the \R\ package \texttt{qrmtools} for
such information.

Another suboptimal design of the RA is to iterate over all $d$ columns before
checking the termination conditions; see Steps~\ref{RA:step:4:low} and
\ref{RA:step:4:up} of Algorithm~\ref{algo:RA}. Our underlying workhorse
\texttt{rearrange()} keeps track of the column rearrangements of the last $d$
considered columns and can thus terminate after rearranging any column (not only
the last one); see also Algorithm~\ref{algo:ARA} below. This saves run time (despite the ``tracking''
overhead). We advise the interested reader to have a look at the source code of
\texttt{rearrange()} for further numerical and run-time improvements (fast accessing of
columns via lists; avoiding having to compute the row sums over all but the
current column; an extended tracing feature), some of which are mentioned in the
vignette \texttt{VaR\_bounds}.

\subsection{Empirical performance under various scenarios}\label{sec:simu:RA}
In order to empirically investigate the performance of the RA, we consider two
studies, each of which addresses four cases; we thus consider eight scenarios.
As studies, we consider the following:
\begin{enumerate}[label=Study~\arabic*]
 \item\hspace{-1ex}:\label{study:1} $N\in\{2^7,2^8,\dots,2^{17}\}$ and $d=20$;
 \item\hspace{-1ex}:\label{study:2} $N=2^8=256$ and $d\in\{2^2,2^3,\dots,2^{10}\}$.
\end{enumerate}
These choices allow us to investigate the impact of the upper tail
discretization parameter $N$ (in \ref{study:1}) and the impact of the number of risk
factors $d$ (in \ref{study:2}) on the performance of the RA. As cases, we consider the
following different marginal tail behaviors based on the Pareto
distribution function $F_j(x)=1-(1+x)^{-\theta_j}$:
\begin{enumerate}[label=xxxxxxxxx]
\item[Case HH]\hspace{-2ex}: $\theta_1,\dots,\theta_d$ form an equidistant sequence from 0.6 to 0.4;
  this case represents a portfolio with all marginal loss distributions being
  heavy-tailed (slightly increasing in heavy-tailedness).
\item[Case LH]\hspace{-2ex}: $\theta_1,\dots,\theta_d$ form an equidistant sequence from
  1.5 to 0.5; this case represents a portfolio with different marginal tail
  behaviors ranging from comparably light-tailed to very heavy-tailed
  distributions.
\item[Case LL]\hspace{-2ex}: $\theta_1,\dots,\theta_d$ form an equidistant sequence from
  1.6 to 1.4; this case represents a portfolio with all marginal loss
  distributions being comparably light-tailed.
\item[Case LH$_1$]\hspace{-2ex}: $\theta_1,\dots,\theta_{d-1}$ are chosen as in
  Case~LL and
  $\theta_d=0.5$; this case represents a portfolio all marginal loss
  distributions being light-tailed except the last.
\end{enumerate}
To keep the studies tractable, we focus on the confidence level $\alpha=0.99$
and the absolute convergence tolerance $\eps=0$ in all scenarios. Furthermore, we consider
$B=100$ replicated simulation runs in order to provide empirical 95\% confidence
intervals for the estimated quantities; note that some of them are so tight that
they are barely visible in the figures presented below. The $B$ replications only differ due to
different permutations of the columns in Steps~\ref{RA:perm:low} and
\ref{RA:perm:up} of Algorithm~\ref{algo:RA}, everything else is deterministic;
this allows us to study the effect of these (initial) randomization steps on the
(convergence) results of the RA. Concerning the hardware used, all results were
produced on an AMD 3.2 GHz Phenom II X4 955 processor with 8 GB RAM.

\subsubsection*{Results of Study 1 ($N$ running, $d$ fixed)}
The simulation results for \ref{study:1} can be summarized as follows:
\begin{itemize}
\item As can be seen in Figure~\ref{fig:RA:study:1:VaR}, the
  means over all $B$ computed $\underline{s}_N$ and $\overline{s}_N$ converge as $N$
  increases.
\item Figure~\ref{fig:RA:study:1:runtime} indicates that as $N$ increases, so
  does the mean elapsed time (as to be expected). Overall, run time does not
  drastically depend on the case for our choices of Pareto margins, which is a good feature.
\item Figure~\ref{fig:RA:study:1:num:iter} shows that the
  maximum number of column rearrangements rarely exceeds $10d$ as $N$ increases; this
  will be used as a default maximum number of column rearrangements
  required in the ARA.
\item Finally, Figure~\ref{fig:RA:study:1:num:opp:ordered} indicates that
  the rate at which the number of oppositely ordered columns (based on the final
  rearranged $\underline{Y}^{\alpha}$ and $\overline{Y}^{\alpha}$) decreases depends
  on the characteristics of the marginal distributions involved, i.e., the input
  matrix $X$. The number of oppositely ordered columns seems particularly small
  (for large $N$) in Case~LL, where essentially only the last column is
  oppositely ordered to the sum of all others.
\end{itemize}

\begin{figure}[htbp]
  \centering
  \includegraphics[width=\textwidth]{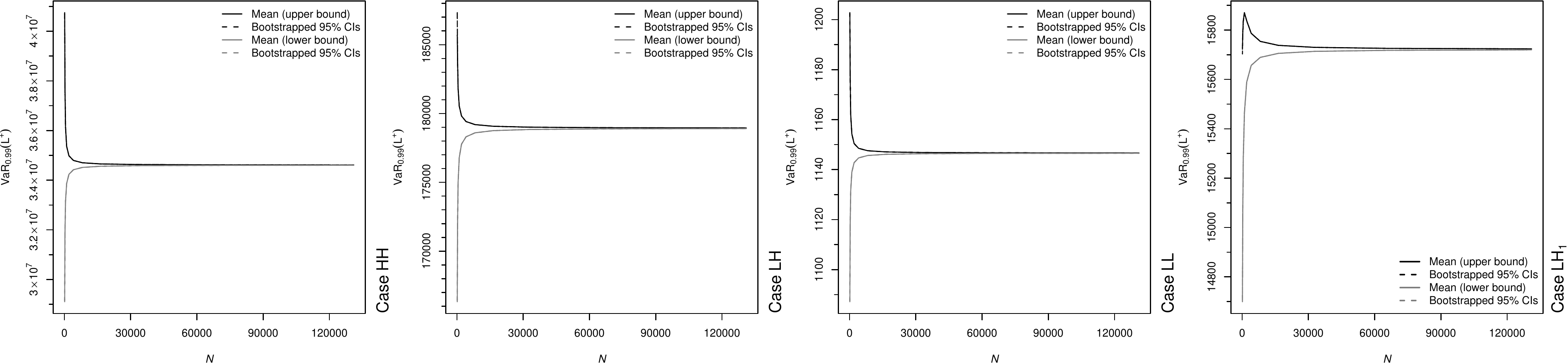}%
  \caption{Study 1: $\bVaR_{0.99}$ bounds $\underline{s}_N$ and $\overline{s}_N$
    for the Cases~HH, LH, LL and LH$_1$ (from left to right).}
  \label{fig:RA:study:1:VaR}
\end{figure}
\begin{figure}[htbp]
  \centering
  \includegraphics[width=\textwidth]{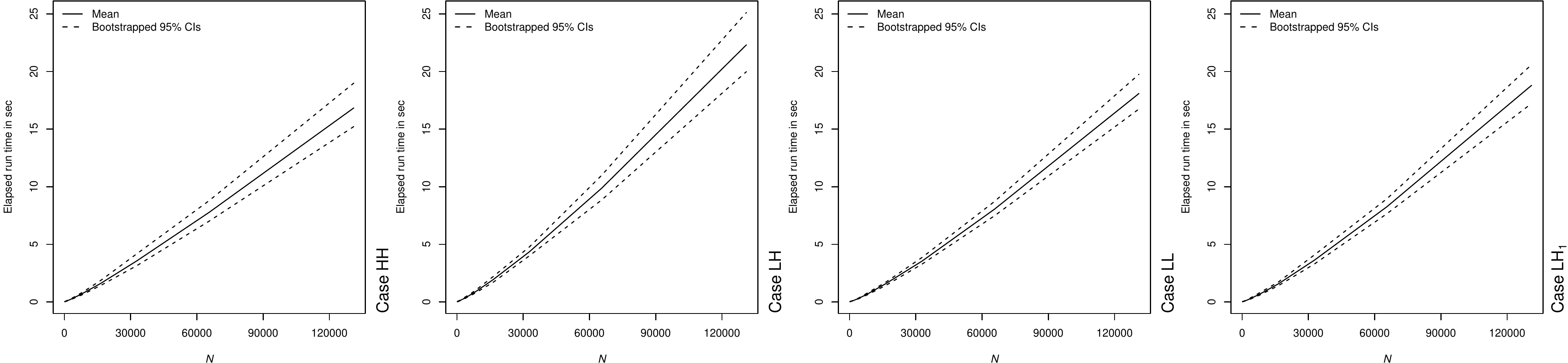}%
  \caption{Study 1: Run times (in s) for the Cases~HH, LH, LL and LH$_1$ (from left to right).}
  \label{fig:RA:study:1:runtime}
\end{figure}
\begin{figure}[htbp]
  \centering
  \includegraphics[width=\textwidth]{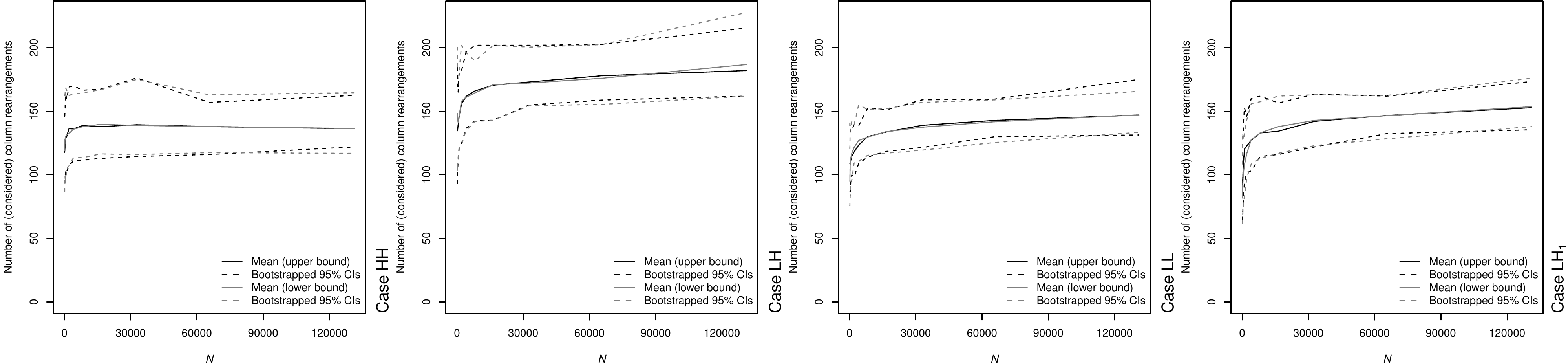}%
  \caption{Study 1: Number of rearranged columns for the Cases~HH, LH, LL
    and LH$_1$ (from left to right).}
  \label{fig:RA:study:1:num:iter}
\end{figure}
\begin{figure}[htbp]
  \centering
  \includegraphics[width=\textwidth]{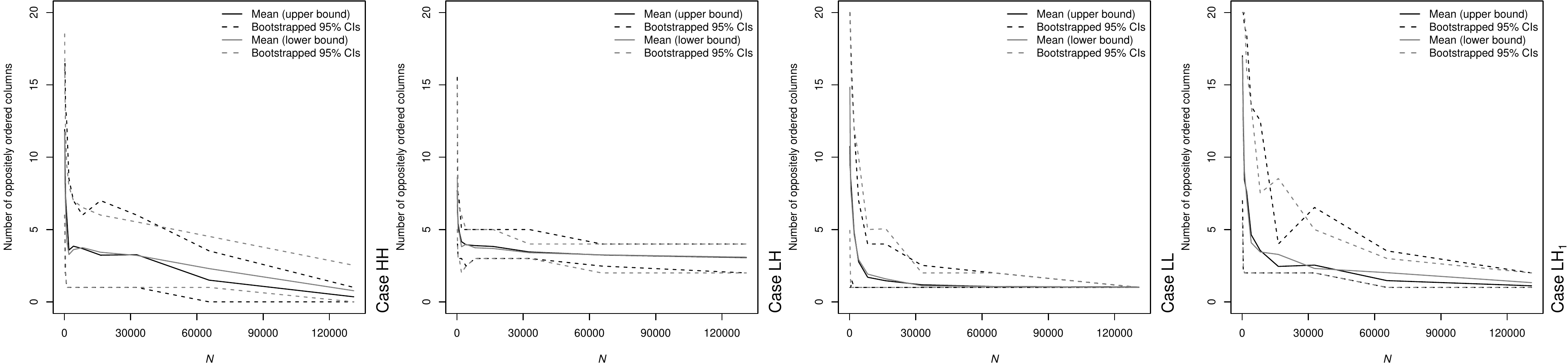}%
  \caption{Study 1: Number of oppositely ordered columns (of
    $\underline{Y}^{\alpha}$ and $\overline{Y}^{\alpha}$) for the Cases~HH, LH,
    LL and LH$_1$ (from left to right).}
  \label{fig:RA:study:1:num:opp:ordered}
\end{figure}

\subsubsection*{Results of Study 2 ($N$ fixed, $d$ running)}
Figures~\ref{fig:RA:study:2:VaR}--\ref{fig:RA:study:2:num:opp:ordered} show the
performance of the RA in \ref{study:2}; here we are interested in
analyzing the impact of the number of risk factors $d$ on portfolios which
exhibit different marginal tail behaviors. The simulation results can be
summarized as follows:
 \begin{itemize}
 \item Figure~\ref{fig:RA:study:2:VaR} shows that the means over all computed
   $\underline{s}_N$ and $\overline{s}_N$ diverge from one another; especially
   when more marginal distributions are heavy-tailed. This is due to the fact
   that we have kept $N$ the same for all cases in \ref{study:2}.
 \item Similar to what we have seen in \ref{study:1},
   Figure~\ref{fig:RA:study:2:runtime} indicates that the mean run time of the RA increases as the
   number of risk factors increases; Case~LL in \ref{study:2} has the
   least run time on average as $\theta_1,\dots,\theta_d$ form an equidistant
   sequence from 1.6 to 1.4 which results in smaller elements of the input
   matrix $X$ with a smaller range of entries compared to the other cases.
 \item As in \ref{study:1}, the mean number of rearranged columns is typically
   below $10d$; see
   Figure~\ref{fig:RA:study:2:num:iter}.
 \item The number of oppositely ordered columns (of $\underline{Y}^{\alpha}$ and $\overline{Y}^{\alpha}$), see
   Figure~\ref{fig:RA:study:2:num:opp:ordered}, increases as $d$
   increases; note that this finding does not contradict what we have seen in
   Study~1 as we have kept $N$ the same here.
 \end{itemize}

\begin{figure}[htbp]
  \centering
  \includegraphics[width=\textwidth]{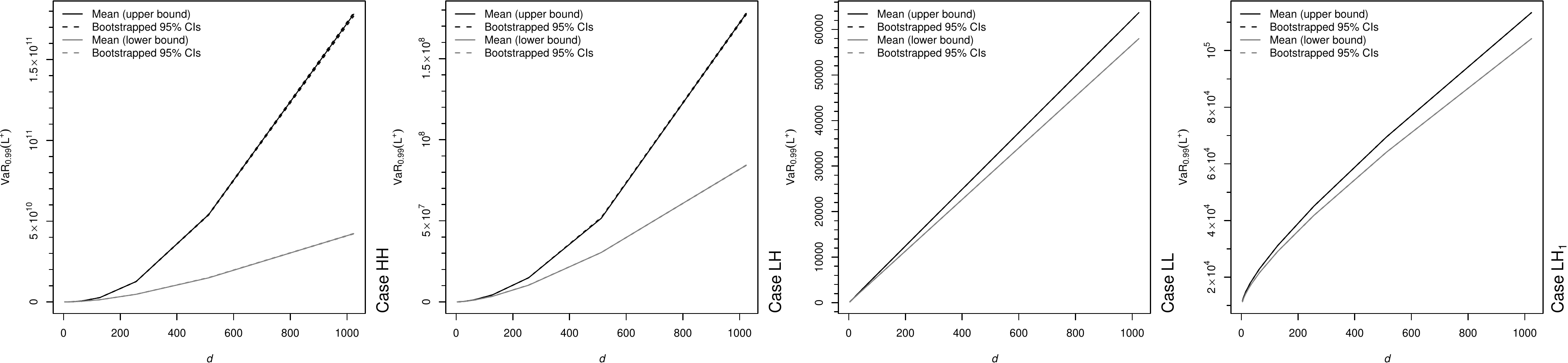}%
  \caption{Study 2: $\bVaR_{0.99}$ bounds
    $\underline{s}_N$ and $\overline{s}_N$ for the Cases~HH, LH, LL and LH$_1$
    (from left to right).}
  \label{fig:RA:study:2:VaR}
\end{figure}
\begin{figure}[htbp]
  \centering
  \includegraphics[width=\textwidth]{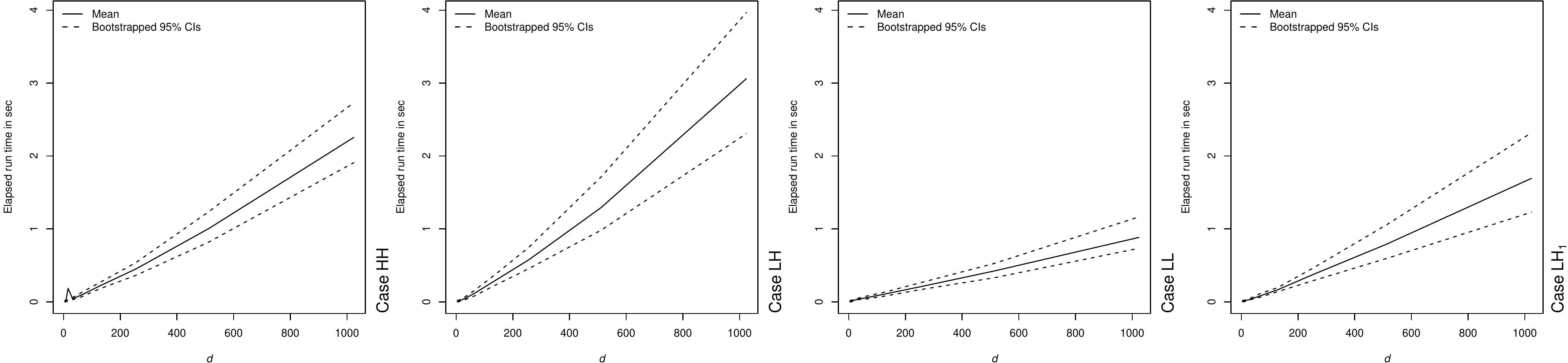}%
  \caption{Study 2: Run times (in s) for the Cases~HH, LH, LL and LH$_1$
    (from left to right).}
  \label{fig:RA:study:2:runtime}
\end{figure}
\begin{figure}[htbp]
  \centering
  \includegraphics[width=\textwidth]{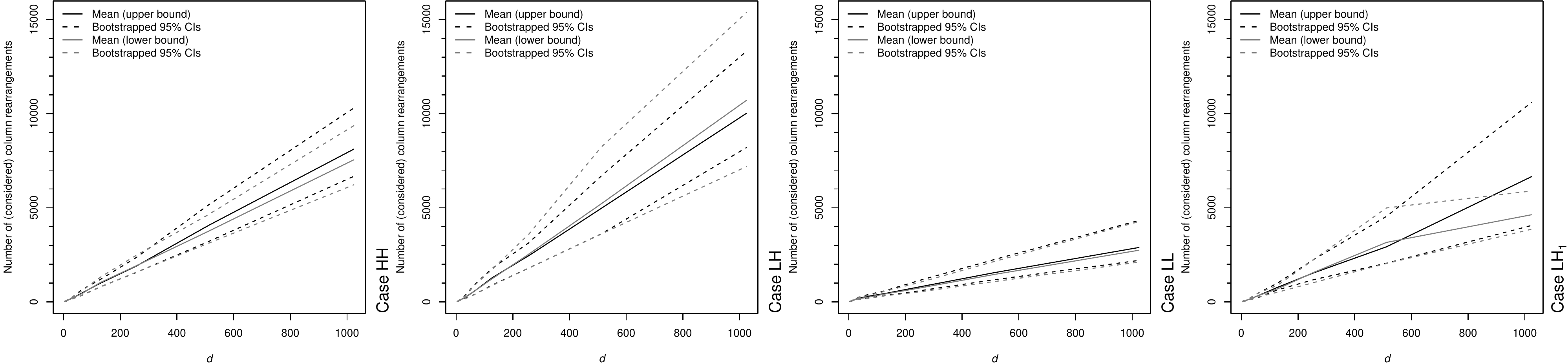}%
  \caption{Study 2: Number of rearranged columns for the Cases~HH, LH, LL and LH$_1$
    (from left to right).}
  \label{fig:RA:study:2:num:iter}
\end{figure}
\begin{figure}[htbp]
  \centering
  \includegraphics[width=\textwidth]{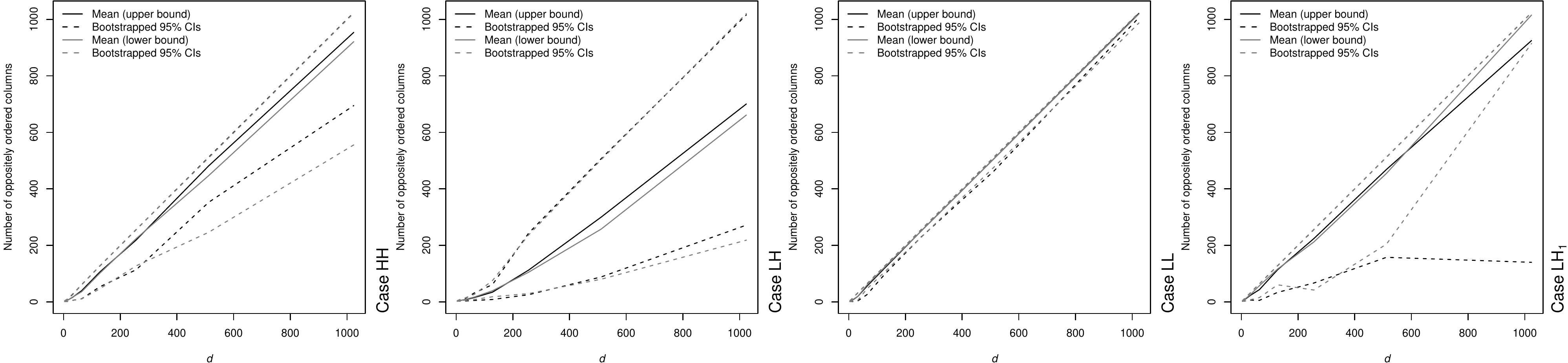}%
  \caption{Study 2: Number of oppositely ordered columns (of
    $\underline{Y}^{\alpha}$ and $\overline{Y}^{\alpha}$) for the Cases~HH, LH,
    LL and LH$_1$ (from left to right).}
  \label{fig:RA:study:2:num:opp:ordered}
\end{figure}

\section{The Adaptive Rearrangement Algorithm}\label{sec:ARA}
\subsection{How the Adaptive Rearrangement Algorithm works}
In this section we present an adaptive version of the RA, termed \emph{Adaptive
  Rearrangement Algorithm (ARA)}. This algorithm for computing the bounds
$\underline{s}_N$ and $\overline{s}_N$ for $\underline{\VaR}_{\alpha}(L^+)$ or
$\bVaR_{\alpha}(L^+)$ (as before, we focus on the latter) provides an
algorithmically improved version of the RA, has more meaningful tuning
parameters and adaptively chooses the number of discretization points $N$. The
ARA is implemented in the \R\ package \texttt{qrmtools}, see the function
\texttt{ARA()}. Similar to our \texttt{RA()} implementation, \texttt{ARA()} also
relies on the workhorse \texttt{rearrange()} and returns much more information
(and can also compute $\underline{\VaR}_{\alpha}(L^+)$),
but the essential part of the algorithm is given as follows.

\begin{algorithm}[ARA for computing $\bVaR_{\alpha}(L^+)$]\label{algo:ARA}
  \begin{enumerate}
  \item Fix a confidence level $\alpha\in(0,1)$, marginal quantile functions
    $F_1^-,\dots,F_d^-$, an integer vector $\bm{K}\in\IN^l$, $l\in\IN$,
    (containing the numbers of discretization points which are adaptively used),
    a bivariate vector of relative convergence tolerances $\bm{\eps}=(\eps_1,\eps_2)$
    (containing the individual relative tolerance $\eps_1$ and the joint relative
    tolerance $\eps_2$; see below) and the maximal number of iterations used for
    each $k\in\bm{K}$.
  \item For $N=2^k$, $k\in\bm{K}$, do:
    \begin{enumerate}
    \item Compute the lower bound:
      \begin{enumerate}
      \item Define the matrix
        $\underline{X}^{\alpha}=(\underline{x}_{ij}^{\alpha})$ for
        $\underline{x}_{ij}^{\alpha} =
        F^-_j\bigl(\alpha+\frac{(1-\alpha)(i-1)}{N}\bigr)$, $i\in\{1,\dots,N\}$,
        $j\in\{1,\dots,d\}$.
      \item Permute randomly the elements in each column of $\underline{X}^{\alpha}$.
      \item\label{ARA:step:4:low} Iteratively rearrange the $j$th column ($j\in\{1,2,\dots,d,1,2,\dots,d,\dots\}$)
        of $\underline{X}^{\alpha}$ oppositely to the sum of all other columns
        until the maximal number of column rearrangements has been reached or until
        \begin{align}
          \Bigl|\frac{s(\underline{Y}^{\alpha})-s(\underline{X}^{\alpha})}{s(\underline{X}^{\alpha})}\Bigr|\le\eps_1,\label{ARA:conv:cond:low}
        \end{align}
        where $\underline{Y}^{\alpha}$ denotes the matrix of quantiles after the
        $j$th column has been rearranged and $\underline{X}^{\alpha}$ denotes
        the same matrix $d$ steps earlier, i.e., after the $j$th column has been
        rearranged the last time.
      \item Set $\underline{s}_N = s(\underline{Y}^{\alpha})$.
      \end{enumerate}
    \item Compute the upper bound:
      \begin{enumerate}
      \item Define the matrix $\overline{X}^{\alpha}=(\overline{x}_{ij}^{\alpha})$
        for $\overline{x}_{ij}^{\alpha}=F^-_j\bigl(\alpha+\frac{(1-\alpha)i}{N}\bigr)$,
        $i\in\{1,\dots,N\}$, $j\in\{1,\dots,d\}$. If (for $i=N$ and) for any
        $j\in\{1,\dots,d\}$, $F^-_j(1)=\infty$, adjust it to
        $F^-_j\bigl(\alpha+\frac{(1-\alpha)(N-1/2)}{N}\bigr)$.
      \item Permute randomly the elements in each column of $\overline{X}^{\alpha}$.
      \item\label{ARA:step:4:up} Iteratively rearrange the $j$th column ($j\in\{1,2,\dots,d,1,2,\dots,d,\dots\}$)
        of $\overline{X}^{\alpha}$ oppositely to the sum of all other columns
        until the maximal number of column rearrangements has been reached or until
        \begin{align}
          \Bigl|\frac{s(\overline{Y}^{\alpha})-s(\overline{X}^{\alpha})}{s(\overline{X}^{\alpha})}\Bigr|\le\eps_1,\label{ARA:conv:cond:up}
        \end{align}
        where $\overline{Y}^{\alpha}$ denotes the matrix of quantiles after the
        $j$th column has been rearranged and $\overline{X}^{\alpha}$ denotes
        the same matrix $d$ steps earlier, i.e., after the $j$th column has been
        rearranged the last time.
      \item Set $\overline{s}_N = s(\overline{Y}^{\alpha})$.
      \end{enumerate}
    \item Determine convergence based on both the individual and the joint
      relative convergence tolerances:
      \begin{align*}
        \text{If \eqref{ARA:conv:cond:low} and \eqref{ARA:conv:cond:up} hold,
          and if}\
        \Bigl|\frac{\overline{s}_N-\underline{s}_N}{\overline{s}_N}\Bigr|\le\eps_2,\
        \text{break.}
      \end{align*}
    \end{enumerate}
  \item Return $(\underline{s}_N,\ \overline{s}_N)$.
  \end{enumerate}
\end{algorithm}

Concerning the choices of tuning parameters in Algorithm~\ref{algo:ARA}, note
that if $\bm{K}=(k=\log_2N)$, so if we have a single number of discretization
points, then the ARA reduces to the RA but uses the more meaningful relative
instead of absolute tolerances and not only checks what we termed
\emph{individual (convergence) tolerance}, i.e., the tolerance $\eps_1$ for
checking ``convergence'' of $\underline{s}_N$ and of $\overline{s}_N$
individually, but also the \emph{joint (convergence tolerance)}, i.e., the
relative tolerance $\eps_2$ between $\underline{s}_N$ and $\overline{s}_N$;
furthermore, termination is checked after each column rearrangement (which is
also done by our implementation \texttt{RA()} but not the version given in
\cite{embrechtspuccettirueschendorf2013}). As our simulation studies in
Section~\ref{sec:simu:RA} suggest, useful (conservative) defaults for $\bm{K}$
and the maximal number of column rearrangements are $\bm{K}=(8,9,\dots,19)$ and $10d$,
respectively. Given the high model uncertainty and the (often) rather large
values of $\bVaR_{\alpha}(L^+)$ (especially in heavy-tailed test cases), a
conservative choice for the relative tolerance $\bm{\eps}$ may be
$\bm{\eps}=(0,\ 0.01)$; obviously, all
these values can be freely chosen in the actual implementation of
\texttt{ARA()}.

\subsection{Empirical performance under various scenarios}
In this section, we empirically investigate the performance of the ARA. To this
end, we consider $d\in\{20,100\}$ risk factors, paired with the Cases~HH, LH,
LL, LH$_1$ as in Section~\ref{sec:simu:RA}. The considered relative joint
tolerances are 0.5\%, 1\% and 2\% and we investigate the results for the
individual relative tolerances 0\% and 0.1\%. Therefore the performance of ARA
is investigated in 48 different scenarios. As before, the results shown are
based on $B=100$ simulations and we investigate the $\bVaR_{0.99}(L^+)$ bounds
$\underline{s}_N$ and $\overline{s}_N$, the \emph{$N$ used} on the final
column rearrangement of ARA (i.e., the $N=2^k$, $k\in\bm{K}$ for which the algorithm terminates),
the run time (in s), the number of column rearrangements (measured for the $N$
used on termination of the algorithm) and the number of oppositely ordered
columns after termination of the algorithm.

\subsubsection*{Results}
We first consider the results for the individual relative tolerance chosen
as $\eps_1=0$.

Our findings (see Figures~\ref{fig:ARA:VaR:0}--\ref{fig:ARA:noi:0} and computed
numbers; the latter are omitted) indicate that:
\begin{itemize}
\item Although for both $d=20$ and $d=100$ the length of the confidence
  intervals for $\bVaR_{0.99}(L^+)$ can be checked to be increasing as the joint
  relative tolerance $\eps_2$ gets larger, the mean and lower and upper
  confidence bounds remain fairly close to each other. More importantly for a
  fixed individual relative tolerance, as $\eps_2$ increases, we do not observe
  a drastic shift in both lower and upper bounds for the mean across different
  examples.
\item An important observation about the $N$ in the ARA is
  that in virtually all examples, the 95\% confidence interval remains the same;
  this fact can be leveraged in practice for portfolios which exhibit the same
  marginal tail behavior to reduce the run time of the ARA.
\item Across all of the 24 scenarios, doubling the joint relative tolerance
  reduces the run time (measured in s) more than 50\%.
\item The number of column rearrangements for the $N$ used remains below $10d$.
\item Finally, as Figures~\ref{fig:ARA:VaR:0}--\ref{fig:ARA:noi:0} reveal,
  randomizing the input matrix $X$ has a minimal impact on various outputs of
  the ARA. However, this randomization has an interesting effect on the run time
  in that it seems to avoid the worst case in which sorting a lot of numbers is
  required when oppositely ordering the columns causing the algorithm to take quite a bit
  longer. This behavior (and dependence of run time on the order of the columns
  as well; typically, convergence was faster with heavier-tailed columns last)
  was clearly visible during our testing phase and leads to the following open
  research question: How can the rows and columns of the initial matrices be
  sorted such that the run time of the ARA is minimal?
\end{itemize}

The effect of a different choice of $\eps_1$ can be seen from
Figures~\ref{fig:ARA:VaR:0.001}--\ref{fig:ARA:noi:0.001}. Overall, they are very
similar, note however Figure~\ref{fig:worst:VaR:bounds:application} (based on
the 20 constituents of the SMI from 2011-09-12 to 2012-03-28) concerning a too
large choice of $\eps_1$. As we can see, both bounds can be fairly close
(according to $\eps_2$) but the individual ``convergence'' did not take place
yet; hence our default $\eps_1=0$ of \texttt{RA()} and \texttt{ARA()}.

\begin{figure}[htbp]
  \centering
  \includegraphics[width=\textwidth]{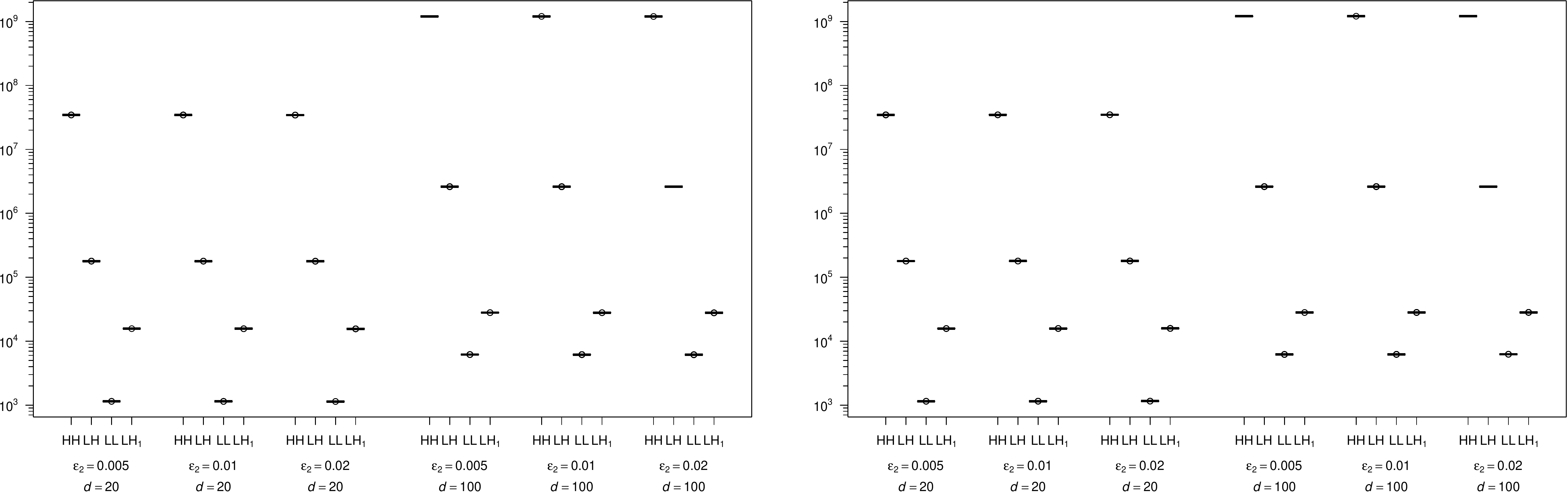}%
  \caption{Boxplots of the lower and upper $\bVaR_{0.99}(L^+)$ bounds
    $\underline{s}_N$ (left-hand side) and $\overline{s}_N$ (right-hand side)
    computed with the ARA for \texttt{itol=0} based on $B=100$ replications.}
  \label{fig:ARA:VaR:0}
\end{figure}

\begin{figure}[htbp]
  \centering
  \includegraphics[width=\textwidth]{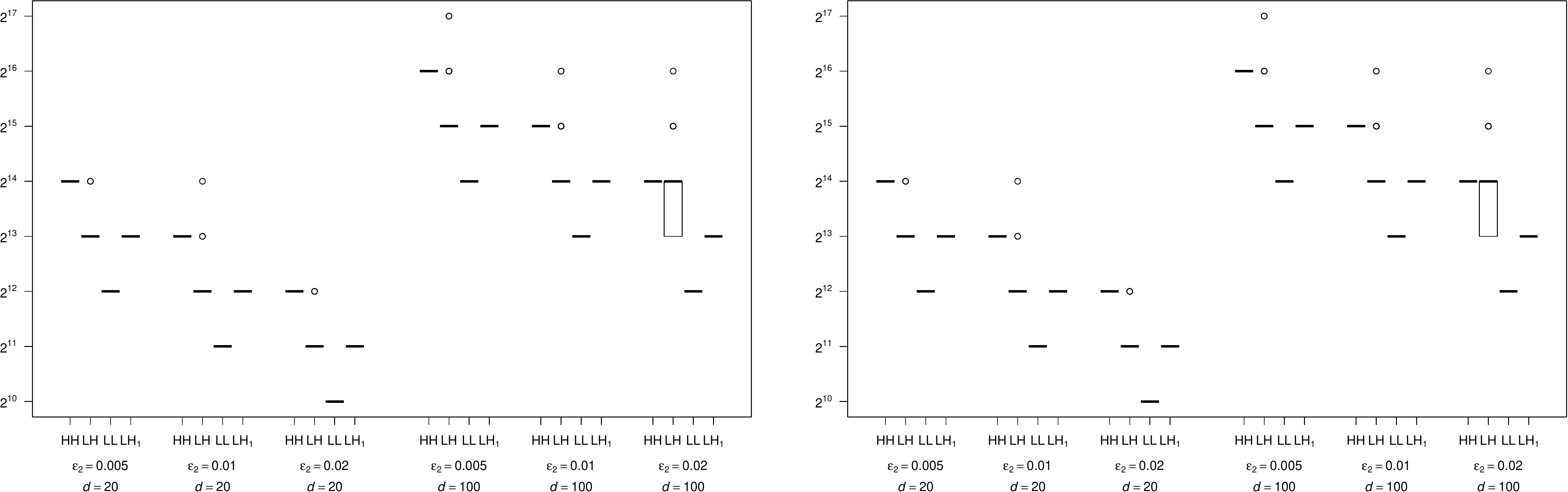}%
  \caption{Boxplots of the actual $N=2^k$ used for computing the lower and upper
    $\bVaR_{0.99}(L^+)$ bounds $\underline{s}_N$ (left-hand side) and
    $\overline{s}_N$ (right-hand side) with the ARA for \texttt{itol=0} based on $B=100$
    replications.}
  \label{fig:ARA:N:used:0}
\end{figure}

\begin{figure}[htbp]
  \centering
  \includegraphics[width=0.88\textwidth]{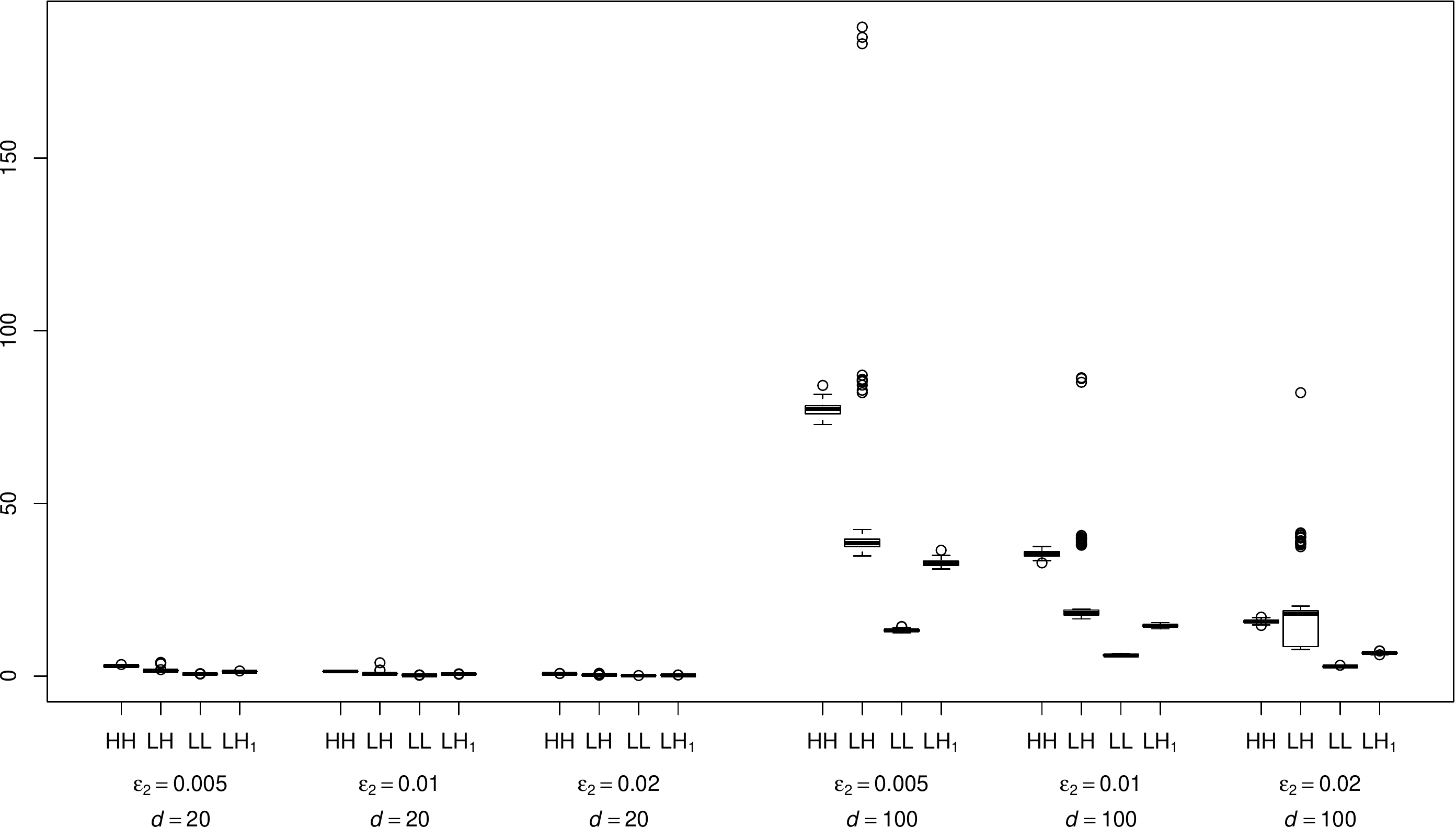}%
  \caption{Boxplots of the run time (in s) for computing the lower and upper
    $\bVaR_{0.99}(L^+)$ bounds $\underline{s}_N$ (left-hand side) and
    $\overline{s}_N$ (right-hand side) with the ARA for \texttt{itol=0} based on $B=100$
    replications.}
  \label{fig:ARA:runtime:0}
\end{figure}

\begin{figure}[htbp]
  \centering
  \includegraphics[width=\textwidth]{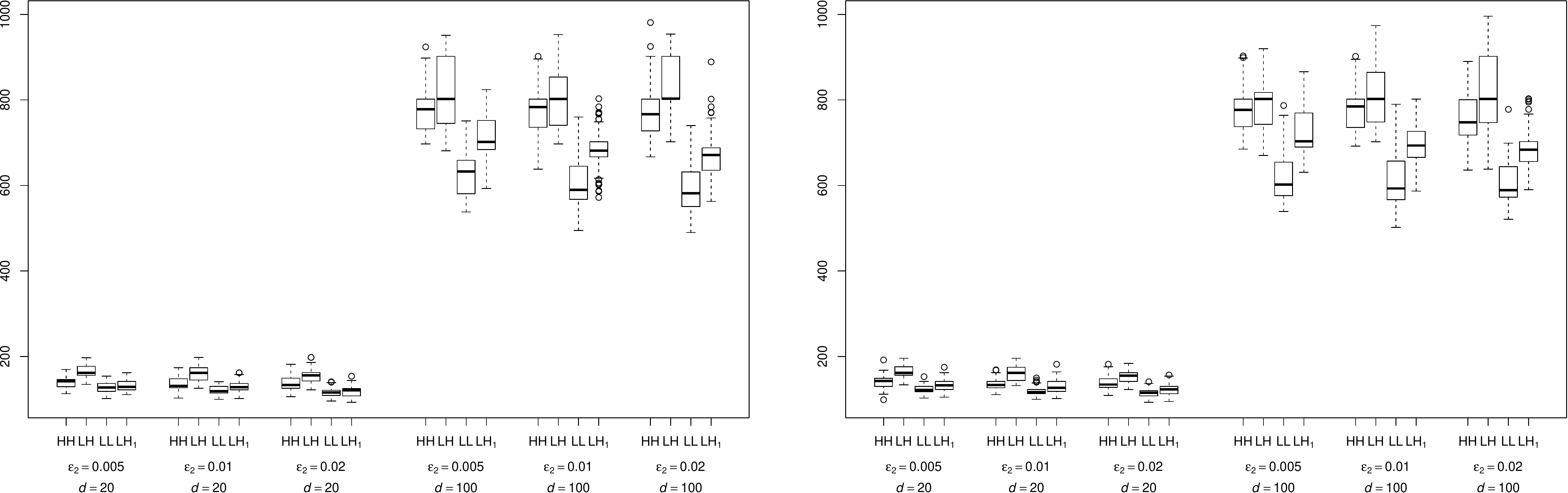}%
  \caption{Boxplots of the number of column rearrangements (measured for
    the $N$ used) for computing the lower and upper $\bVaR_{0.99}(L^+)$ bounds
    $\underline{s}_N$ (left-hand side) and $\overline{s}_N$ (right-hand side)
    with the ARA for \texttt{itol=0} based on $B=100$ replications.}
  \label{fig:ARA:noi:0}
\end{figure}

\begin{figure}[htbp]
  \centering
  \includegraphics[width=\textwidth]{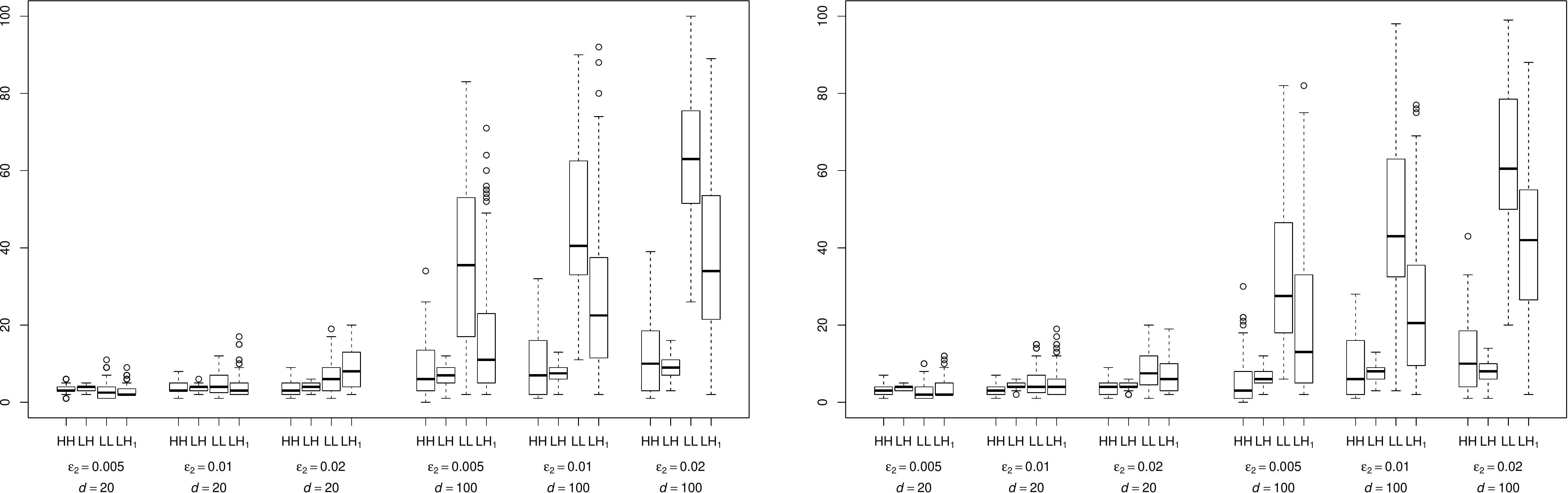}%
  \caption{Boxplots of the number of oppositely ordered columns for computing
    the lower and upper $\bVaR_{0.99}(L^+)$ bounds
    $\underline{s}_N$ (left-hand side) and $\overline{s}_N$ (right-hand side)
    with the ARA for \texttt{itol=0} based on $B=100$ replications.}
  \label{fig:ARA:noo:0}
\end{figure}

\begin{figure}[htbp]
  \centering
  \includegraphics[width=\textwidth]{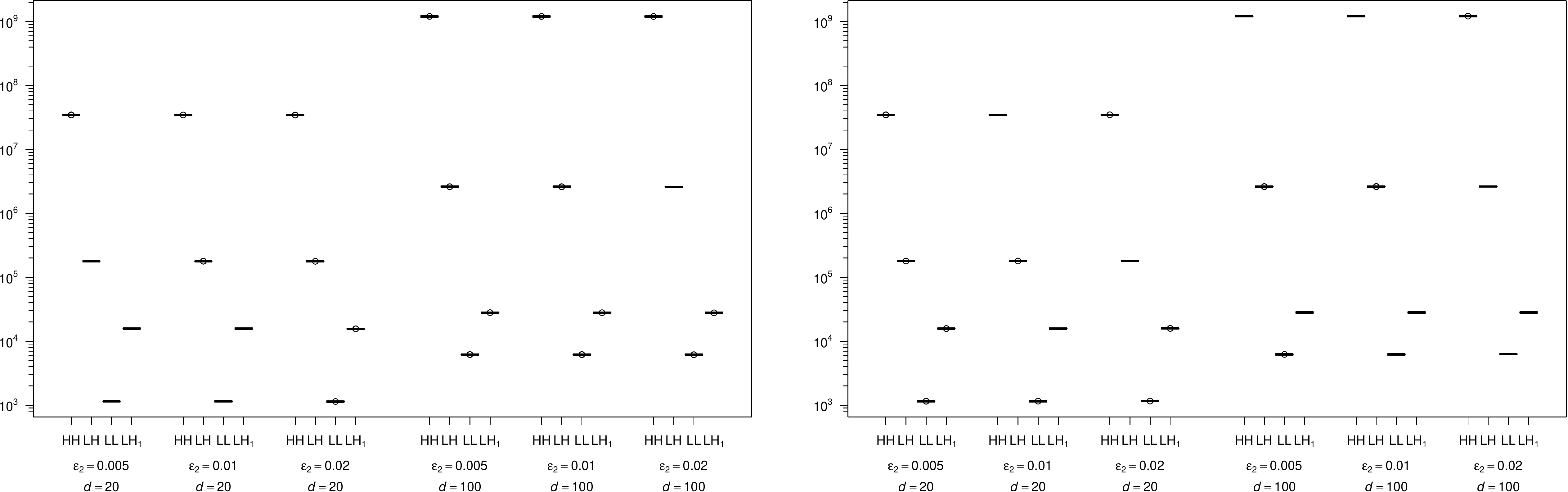}%
  \caption{Boxplots of the lower and upper $\bVaR_{0.99}(L^+)$ bounds
    $\underline{s}_N$ (left-hand side) and $\overline{s}_N$ (right-hand side)
    computed with the ARA for \texttt{itol=0.001} based on $B=100$ replications.}
  \label{fig:ARA:VaR:0.001}
\end{figure}

\begin{figure}[htbp]
  \centering
  \includegraphics[width=\textwidth]{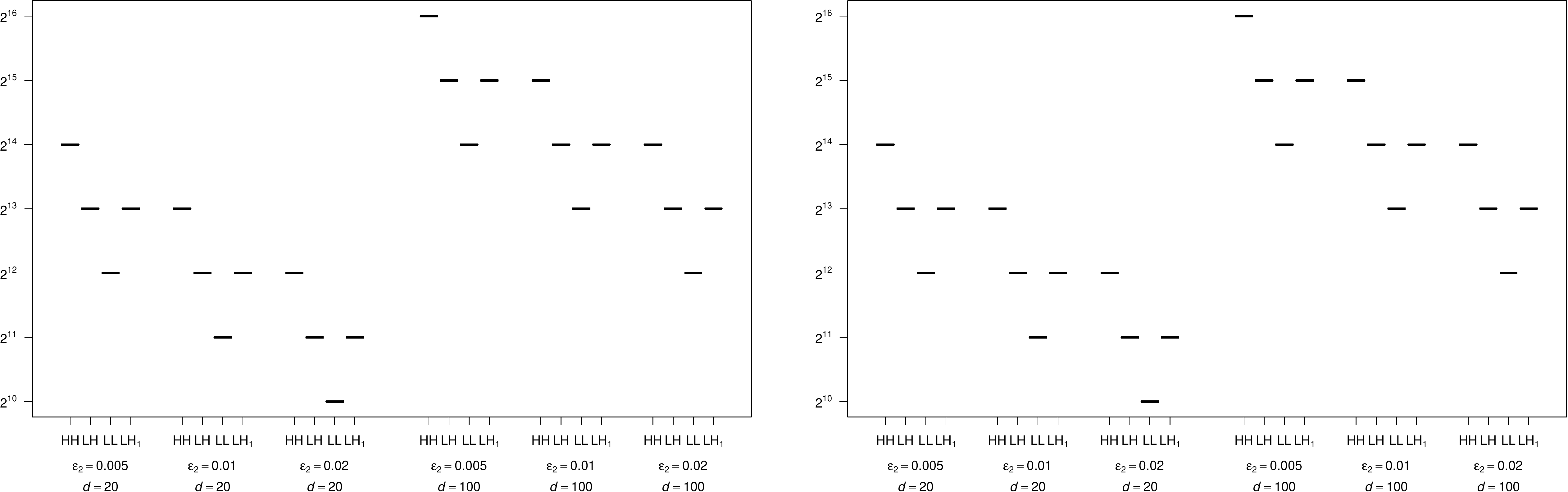}%
  \caption{Boxplots of the actual $N=2^k$ used for computing the lower and upper
    $\bVaR_{0.99}(L^+)$ bounds $\underline{s}_N$ (left-hand side) and
    $\overline{s}_N$ (right-hand side) with the ARA for \texttt{itol=0.001} based on $B=100$
    replications.}
  \label{fig:ARA:N:used:0.001}
\end{figure}

\begin{figure}[htbp]
  \centering
  \includegraphics[width=0.88\textwidth]{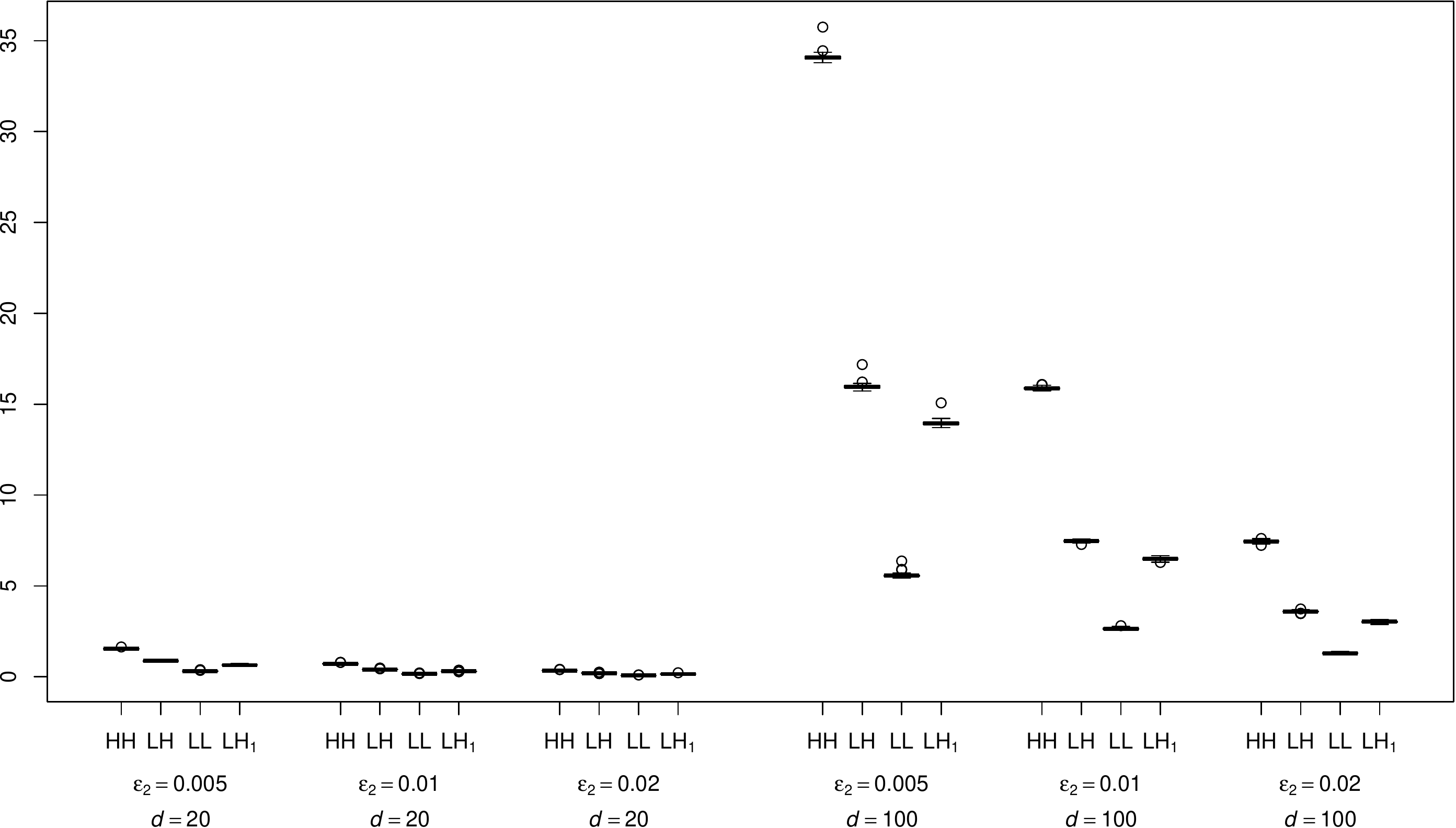}%
  \caption{Boxplots of the run time (in s) for computing the lower and upper
    $\bVaR_{0.99}(L^+)$ bounds $\underline{s}_N$ (left-hand side) and
    $\overline{s}_N$ (right-hand side) with the ARA for \texttt{itol=0.001} based on $B=100$
    replications.}
  \label{fig:ARA:runtime:0.001}
\end{figure}

\begin{figure}[htbp]
  \centering
  \includegraphics[width=\textwidth]{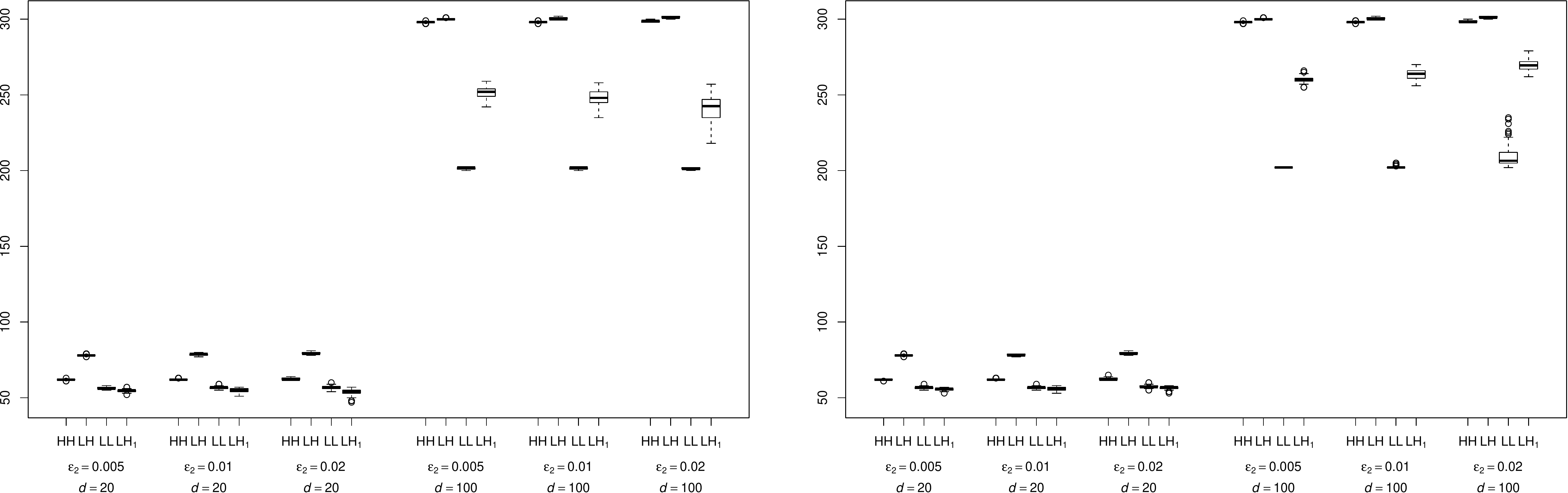}%
  \caption{Boxplots of the number of column rearrangements (measured for
    the $N$ used) for computing the lower and upper $\bVaR_{0.99}(L^+)$ bounds
    $\underline{s}_N$ (left-hand side) and $\overline{s}_N$ (right-hand side)
    with the ARA for \texttt{itol=0.001} based on $B=100$ replications.}
  \label{fig:ARA:noi:0.001}
\end{figure}

\begin{figure}[htbp]
  \centering
  \includegraphics[width=\textwidth]{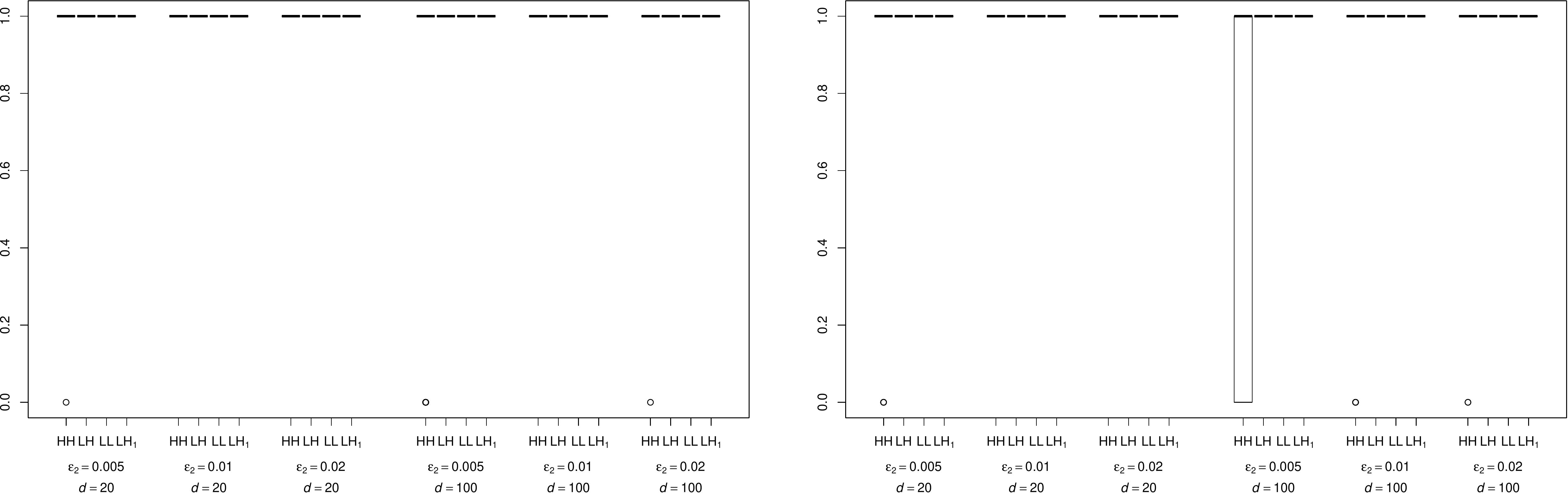}%
  \caption{Boxplots of the number of oppositely ordered columns for computing
    the lower and upper $\bVaR_{0.99}(L^+)$ bounds
    $\underline{s}_N$ (left-hand side) and $\overline{s}_N$ (right-hand side)
    with the ARA for \texttt{itol=0.001} based on $B=100$ replications.}
  \label{fig:ARA:noo:0.001}
\end{figure}

\begin{figure}[htbp]
  \begin{minipage}[t]{0.48\textwidth}
    \centering
    \includegraphics[width=\textwidth]{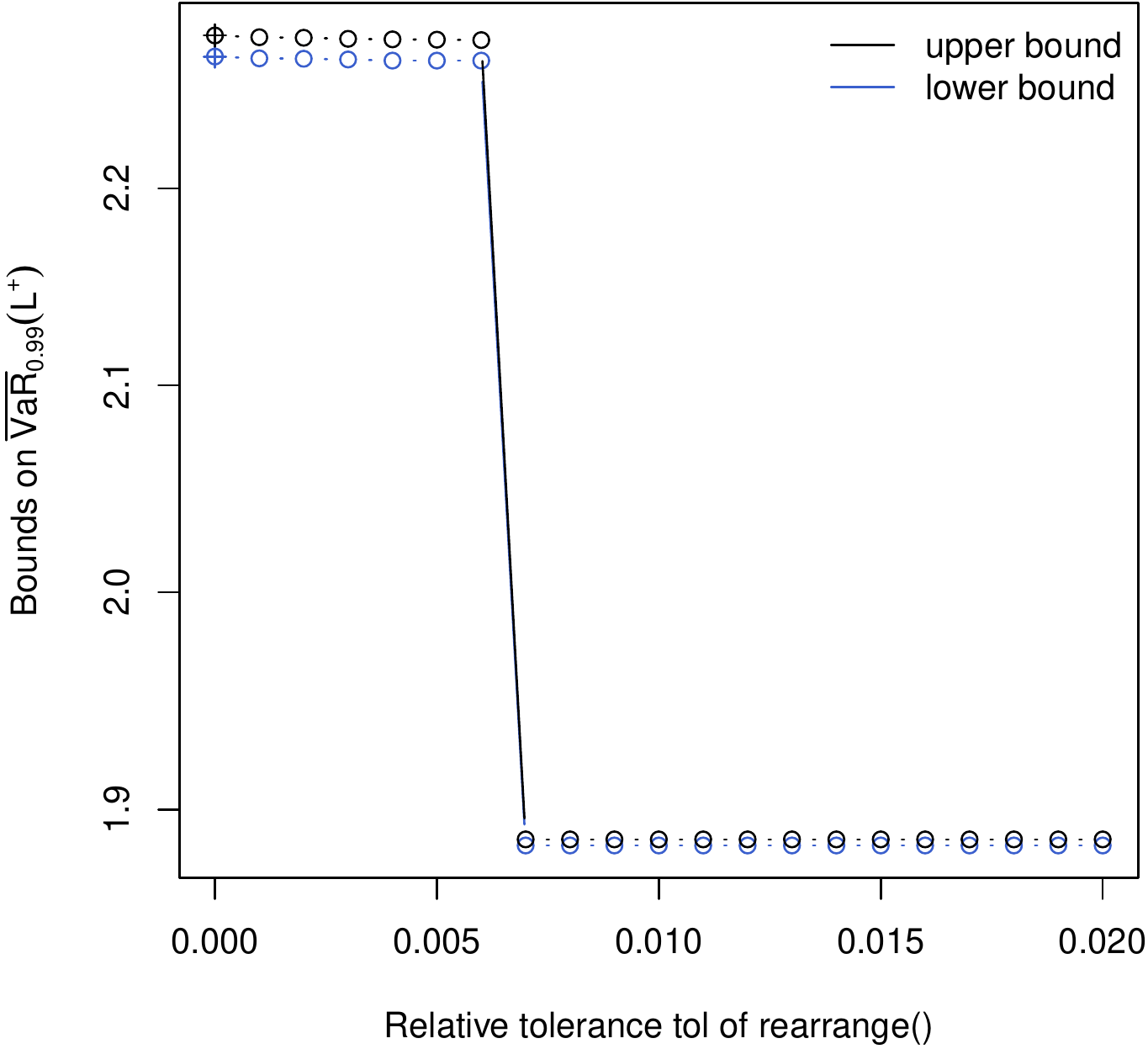}%
    \caption{Computed $\bVaR_{0.99}(L^+)$ bounds $\underline{s}_N$ and
      $\overline{s}_N$ (with \texttt{rearrange()}) depending on the chosen relative tolerance $\eps_1$
      (the cross ``$+$'' indicating the values for $\eps_1=$ \texttt{NULL})
      for an application to SMI constituents data from 2011-09-12
      to 2012-03-28; see the vignette \texttt{VaR\_bounds} for more details.}
    \label{fig:worst:VaR:bounds:application}
  \end{minipage}
  \hfill
  \begin{minipage}[t]{0.48\textwidth}
    \centering
    \includegraphics[width=\textwidth]{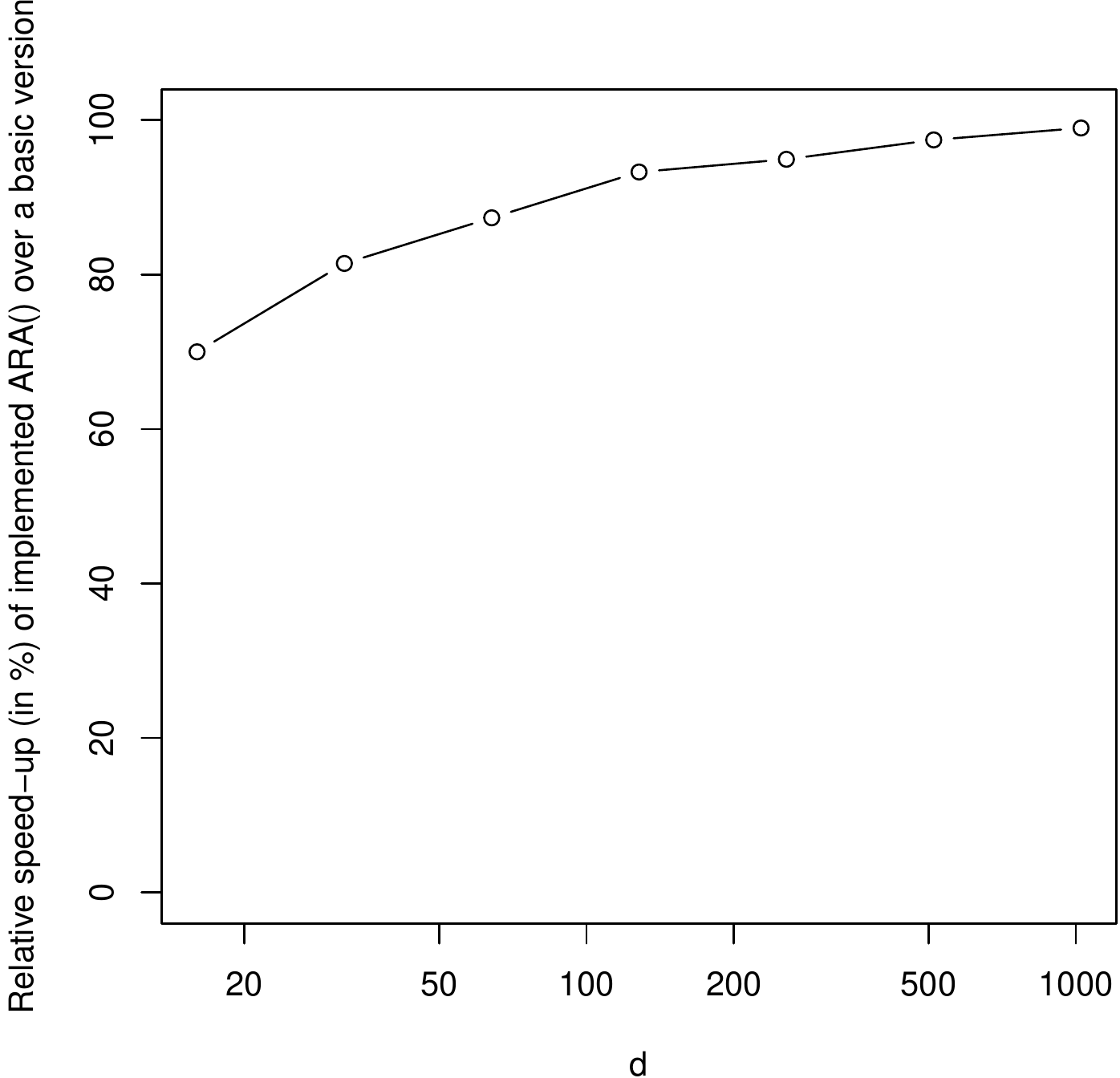}%
    \caption{Relative speed-up (in \%) of the actual implementation of
      \texttt{ARA()} over a basic version as given in the vignette
      \texttt{VaR\_bounds}.}
    \label{fig:ARA:speed:up}
  \end{minipage}
\end{figure}

\section{Conclusion}\label{sec:con}
This paper presents two major contributions to the computation of the worst
Value-at-Risk for a sum of losses with given marginals in the context of
Quantitative Risk Management.

First, we considered the homogeneous case (i.e., all margins being equal) and
addressed the dual bound approach based on
\cite[Proposition~4]{embrechtspuccettirueschendorf2013} and Wang's approach
based on \cite[Proposition~1]{embrechtspuccettirueschendorfwangbeleraj2014} for
computing worst Value-at-Risk. Although both of these approaches are
mathematically ``explicit'', care has to be exercised when computing worst
Value-at-Risk bounds with these algorithms in practice. We identified and
overcame several numerical and computational hurdles in their implementation and
addressed them using the \R\ package \texttt{qrmtools} including the vignette
\texttt{VaR\_bounds}. We covered several numerical steps such as how to compute
initial intervals for the root-finding procedure involved; a particular
example which highlights the \emph{numerical challenges} when computing worst
Value-at-Risk in general is the case of equal Pareto margins (for which we also
showed uniqueness of the root even in the infinite-mean case).

Overall, the reader should keep in mind that there is a substantial difference
between implementing a specific model (say, with $\Par(2)$ margins) where
initial intervals can be guessed or chosen ``sufficiently large'' and the proper
implementation of a result such as
\cite[Proposition~4]{embrechtspuccettirueschendorf2013} in the form of a
black-box algorithm; see the source code of \texttt{qrmtools} for more details
and the work required to go in this direction.

Second, we considered the general, i.e., inhomogeneous case. We first
investigated the Rearrangement Algorithm presented by
\cite{embrechtspuccettirueschendorf2013}. Due to its simplicity, this algorithm
by now has been widely adopted by the industry (see also
\url{https://sites.google.com/site/rearrangementalgorithm/}). Nevertheless, the
original algorithm leaves questions unanswered concerning the concrete choice of
two of its tuning parameters. These parameters were shown to have a substantial
impact on the algorithm's performance and thus need to be chosen with care. We
therefore presented an improved version of the Rearrangement Algorithm termed
Adaptive Rearrangement Algorithm. The latter improves the former in that it
addresses the aforementioned two tuning parameters and improves on the
underlying algorithmic design. The number of discretization points is chosen
automatically in an adaptive way (hence the name of the algorithm). The absolute
convergence tolerance is replaced by two relative convergence tolerances. Since
they are relative tolerances, their choice is much more intuitive. Furthermore,
the first relative tolerance is used to determine the individual convergence of
each of the lower bound $\underline{s}_N$ and the upper bound $\overline{s}_N$
for worst Value-at-Risk and the second relative tolerance is used to control how
far apart $\underline{s}_N$ and $\overline{s}_N$ are; the original version of
the Rearrangement Algorithm does not allow for such a control. The Adaptive
Rearrangement algorithm has been implemented in the \R\ package
\texttt{qrmtools}, together with conservative defaults. The implementation
contains several other improvements as well (e.g., fast accessing of columns via
lists; avoiding having to compute the row sums over all but the current column;
an extended tracing feature).

There are still several interesting questions left to be investigated. First of
all, as for the Rearrangement Algorithm, the theoretical convergence properties
of the Adaptive Rearrangement Algorithm remain an open problem. Also, as
mentioned above, it remains unclear how the rows and columns of the input
matrices for the RA or ARA can be set up in an initialization such that the run
time is minimal. Another interesting question is by how much we can reduce run
time when using the rearranged matrix from case $N=2^{k-1}$ to construct the
matrix for the case $N=2^k$ (which is why we used powers of 2 here); it remains
an open question whether the overhead of building that matrix outperforms (the
additional) sorting.

\subsection*{Acknowledgments}
We would like to thank Giovanni Puccetti (University of Milano), and, in
particular, Ruodu Wang (University of Waterloo) for valuable comments on an early draft of
this paper. Furthermore, we would like to thank Kurt Hornik (Vienna
University of Economics and Business) for pointing out algorithmic improvements.

\printbibliography[heading=bibintoc]

\appendix
\section{Appendix}

\begin{proof}[Proof of Lemma~\ref{lem:crude}]
  Consider the lower bound for $\VaR_\alpha(L^+)$. By De Morgan's Law and Boole's inequality,
  the distribution function $F_{L^+}$ of $L^+$ satisfies
  \begin{align*}
    F_{L^+}(x)&=\IP\Bigl(\,\sum_{j=1}^dL_j\le x\Bigr)\le\IP(\min_j L_j\le
    x/d)=\IP\Bigl(\bigcup_{j=1}^d\{L_j\le x/d\}\Bigr)\le\sum_{j=1}^d\IP(L_j\le
    x/d)\\
    &\le d\max_j F_j(x/d).
  \end{align*}
  Now $d\max_j F_j(x/d)\le\alpha$ if and only if $x\le d\min_j F_j^-(\alpha/d)$
  and thus $\VaR_\alpha(L^+)=F_{L^+}^-(\alpha)\ge d\min_j F_j^-(\alpha/d)$.

  Similarly, for the upper bound for $\VaR_\alpha(L^+)$, we have that
  \begin{align*}
    F_{L^+}(x)&=\IP\Bigl(\,\sum_{j=1}^dL_j\le x\Bigr)\ge\IP(\max_j L_j\le
    x/d)=\IP(L_1\le x/d,\dots,L_d\le x/d)\\
    &=1-\IP\Bigl(\bigcup_{j=1}^d\{L_j>x/d\}\Bigr)\ge\max\Bigl\{1-\sum_{j=1}^d\IP(L_j>x/d),0\Bigr\}\\
    &=\max\Bigl\{\,\sum_{j=1}^dF_j(x/d)-d+1,0\Bigr\}\ge\max\{d\min_jF_j(x/d)-d+1,0\}.
  \end{align*}
  Now $d\min_jF_j(x/d)-d+1\ge\alpha$ if and only if $x\ge d\max_j F_j^-((d-1+\alpha)/d)$
  and thus $\VaR_\alpha(L^+)=F_{L^+}^-(\alpha)\le d\max_j F_j^-((d-1+\alpha)/d)$.
\end{proof}

\begin{proof}[Proof of Proposition~\ref{prop:D}]\mbox{}
  \begin{enumerate}
  \item Let $s \ge s'$ and $t'\in[0,\frac{s'}{d}]$ such that $D(s',t')=D(s')$. Define
  \begin{align*}
      t=\frac{s-(s'-t'd)}{d}=\frac{s-s'}{d}+t'
    \end{align*}
    so that $0\le t' \le t \le \frac{s}{d}$.
    Let $\kappa = s'-t'd=s -td$. If $\kappa > 0$,
    noting that $\bar{F}$ is decreasing and that $t'\le t$, we obtain
    \begin{align*}
      D(s',t') - D(s,t)
      =\frac{d}{\kappa}\Bigl(\,\int_{t'}^{t'+\kappa}\bar{F}(x)dx-\int_{t}^{t+\kappa}\bar{F}(x)dx\Bigr)\ge 0,
    \end{align*}
    so that $D(s) \le D(s,t) \le D(s',t')=D(s')$. If $\kappa=0$ then $D(s')=D(s',\tfrac{s'}{d}) =d\bar F(\tfrac{s'}{d})
    \ge d \bar F(\tfrac{s}{d}) \ge D(s)$.
  \item Recall that $D(s,t) = \frac{d}{s-td} \int_{t}^{t+(s-td)} \bar{F}(x)\,dx$. Using the transformation $z=(x-t)/(s-td)$, we have
    \begin{align*}
      D(s,t) = d \int_0^1 \bar{F} (s z+t(1-z d))\,dz
    \end{align*}
    Define $C=\{(s,t)\,|\,0 \le s < \infty,\ 0\le t \le \frac{s}{d}\}$, and note that $C$ is convex. Furthermore,
    if $\bar{F}$ is convex, then $D(s,t)$ is jointly convex in $s$ and $t$ on $C$ since for $\lambda \in (0,1)$,
    \begin{align*}
      &\phantom{{}={}}D(\lambda s_1+(1-\lambda)s_2, \lambda t_1+(1-\lambda)t_2)\\
      &= d \int_0^1 \bar{F}((\lambda s_1+(1-\lambda)s_2)z+ (\lambda t_1+(1-\lambda)t_2)(1 -z d))\,dz\\
      &= d \int_0^1 \bar{F} (\lambda(s_1z+t_1(1-z d))+(1-\lambda)(s_2z+t_2(1-z d)))\,dz \\
      &\le \int_0^1 \lambda \bar{F} (((s_1+t_1(1-z d)))+(1-\lambda)\bar{F}((s_2+t_2(1-z d))))\,dz\\
      &= \lambda D(s_1,t_1) +(1-\lambda) D(s_2,t_2)\qedhere
    \end{align*}
  \end{enumerate}
\end{proof}

\begin{proof}[Proof of Proposition~\ref{prop:unique:root:Par}]
  First consider $\theta\neq1$. Using \eqref{eq:Ibar:Par}, one can
  rewrite $h(c)$ as
  \begin{align*}
    h(c)=\frac{c^{-1/\theta+1}\frac{\theta}{1-\theta}(1-(\frac{1-\alpha}{c}-(d-1))^{-1/\theta+1})}{1-\alpha-dc}-\frac{(d-1)(\frac{1-\alpha}{c}-(d-1)^{-1/\theta}+1)}{c^{1/\theta}d}.
  \end{align*}
  Multiplying with $c^{1/\theta}d$ and rewriting the expression, one sees that
  $h(c)=0$ is equivalent to $h_1(x_c)=0$ where
  \begin{align}
    x_c=\frac{1-\alpha}{c}-(d-1)\label{xc}
  \end{align}
  (which is in $(1,\infty)$ for $c\in(0,(1-\alpha)/d)$) and
  $h_1(x)=d\frac{\theta}{1-\theta}\frac{1-x^{-1/\theta+1}}{x-1}-((d-1)x^{-1/\theta}+1)$.
  It is easy to see that $h_1(x)=0$ if and only if
  $h_2(x)=0$, where
  \begin{align}
    h_2(x)=(d/(1-\theta)-1)x^{-1/\theta+1}-(d-1)x^{-1/\theta}+x-(d\theta/(1-\theta)+1),\quad x\in(1,\infty).\label{eq:def:h2}
  \end{align}
  We are done for $\theta\neq1$ if we show that $h_2$ has a unique root on
  $(1,\infty)$. To this end, note that $h_2(1)=0$ and
  $\lim_{x\uparrow\infty}h_2(x)=\infty$. Furthermore,
  \begin{align*}
    h_2'(x)&=(1-1/\theta)(d/(1-\theta)-1)x^{-1/\theta}+(d-1)\theta
             x^{-1/\theta-1}+1,\\
    h_2''(x)&=(d+\theta-1)/\theta^2 x^{-1/\theta-1} - (1/\theta+1)(d-1)/\theta x^{-1/\theta-2}.
  \end{align*}
  It is not difficult to check that $h_2''(x)=0$ if and only if
  $x=\frac{(d-1)(1+\theta)}{d+\theta-1}$ (which is greater than 1 for
  $d>2$). Hence, $h_2$ can have at most one root. We are done if we find an
  $x_0\in(1,\infty)$ such that $h_2(x_0)<0$, but this is guaranteed by the fact that
  $\lim_{x\downarrow 1}h_2'(x)=0$ and $\lim_{x\downarrow
    1}h_2''(x)=-(d-2)/\theta<0$ for $d>2$.

  The proof for $\theta=1$ works similarly; in this case, $h_2$ 
  is given by
  \begin{align}
    h_2(x)=x^2+x(-d\log(x)+d-2)-(d-1),\quad x\in(1,\infty),\label{eq:def:h2:th1}
  \end{align}
  and the unique point of inflection of $h_2$ is $x=d/2$.
\end{proof}

\begin{proof}[Proof of Proposition~\ref{prop:cl:cu:Par}]
  First consider $c_l$ and $\theta\neq1$. Instead of $h$,
  \eqref{xc} and \eqref{eq:def:h2} allow us to study
  \begin{align*}
    h_2(x)=(d/(1-\theta)-1)x^{-1/\theta+1}-(d-1)x^{-1/\theta}+x-(d\theta/(1-\theta)+1),\quad x\in[1,\infty).
  \end{align*}
  Consider the two cases $\theta\in(0,1)$ and $\theta\in(1,\infty)$
  separately. If $\theta\in(0,1)$, then $d/(1-\theta)-1>d-1\ge0$ and
  $x^{-1/\theta+1}\ge x^{-1/\theta}$ for all $x\ge1$, so
  $h_2(x)\ge
  ((d/(1-\theta)-1)-(d-1))x^{-1/\theta}+x-(d\theta/(1-\theta)+1)\ge
  x-(d\theta/(1-\theta)+1)$
  which is 0 if and only if $x=d\theta/(1-\theta)+1$. Setting this equal to
  $x_c$ (defined in \eqref{xc}) and solving for $c$ leads to the $c_l$ as
  provided. If $\theta\in(1,\infty)$, then using $x^{-1/\theta}\le 1$ leads to
  $h_2(x)\ge(d/(1-\theta)-1)x^{-1/\theta+1}+x$ which is 0 for $x\ge1$ if and
  only if $x=(d/(\theta-1)+1)^\theta$. Setting this equal to $x_c$ and solving
  for $c$ leads to the $c_l$ as provided.

  Now consider $\theta=1$. Similar as before, we can consider
  \eqref{xc} and \eqref{eq:def:h2:th1}.
  By using that $\log x\le x^{1/e}$ and $x\ge -x^{1+1/e}$ for $x\in[1,\infty)$,
  we obtain
  $h_2(x)\ge x^2+x(-dx^{1/e}+d-2)-(d-1)\ge x^2-(d+1)x^{1+1/e}$
  which is 0 if and only if $x=(d+1)^{e/(e-1)}$. Setting this equal to $x_c$ and
  solving for $c$ leads to the $c_l$ as provided.

  Now consider $c_u$. It is easy to see that the inflection point of $h_2$
  provides a lower bound $x_c$ on the root of $h_2$. As derived in the proof of Proposition~\ref{prop:unique:root:Par},
  the point of inflection is $x=x_c:=\frac{(d-1)(1+\theta)}{d+\theta-1}$ for
  $\theta\neq1$ and $x=d/2$ for $\theta=1$. Solving
  $x_c=(1-\alpha)/c-(d-1)$ for $c$ then leads to $c_u$ as stated.
\end{proof}

\bigskip
\noindent{\bfseries Marius Hofert}\\
Department of Statistics and Actuarial Science\\
University of Waterloo\\
200 University Avenue West, Waterloo, ON, N2L 3G1\\
\href{mailto:marius.hofert@uwaterloo.ca}{\nolinkurl{marius.hofert@uwaterloo.ca}}

\bigskip
\noindent{\bfseries Amir Memartoluie}\\
Cheriton School of Computer Science\\
University of Waterloo\\
200 University Avenue West, Waterloo, ON, N2L 3G1\\
\href{mailto:amir.memartoluie@uwaterloo.ca}{\nolinkurl{amir.memartoluie@uwaterloo.ca}}

\bigskip
\noindent{\bfseries David Saunders}\\
Department of Statistics and Actuarial Science\\
University of Waterloo\\
200 University Avenue West, Waterloo, ON, N2L 3G1\\
\href{mailto:david.saunders@uwaterloo.ca}{\nolinkurl{david.saunders@uwaterloo.ca}}

\bigskip
\noindent{\bfseries Tony Wirjanto}\\
Department of Statistics and Actuarial Science\\
University of Waterloo\\
200 University Avenue West, Waterloo, ON, N2L 3G1\\
\href{mailto:tony.wirjanto@uwaterloo.ca}{\nolinkurl{tony.wirjanto@uwaterloo.ca}}

\end{document}
